\documentclass{gen-j-l}%
\usepackage{amsfonts}
\usepackage{amsmath}
\usepackage{amssymb}
\usepackage{graphicx}%
\setcounter{MaxMatrixCols}{30}
\newtheorem{theorem}{Theorem}[section]
\newtheorem{lemma}[theorem]{Lemma}
\theoremstyle{definition}

\newtheorem{example}[theorem]{Example}

\theoremstyle{remark}
\newtheorem{remark}[theorem]{Remark}
\numberwithin{equation}{section}
\theoremstyle{plain}

\newtheorem{notation}{Notation}

\copyrightinfo{2017}{Carey Caginalp and Gunduz Caginalp}
\begin{document}
\title[Asset Price Volatility and Price Extrema]{ Asset Price Volatility and Price Extrema}
\author{Carey Caginalp}
\address{Economic Science Institute, Chapman Unviersity, 
Orange, CA 92866 and Mathematics Department, University of Pittsburgh, Pittsburgh, PA\ 15260 
and Mathematics Department, University of Pittsburgh, Pittsburgh, PA\ 15260}
\email{carey\_caginalp@alumni.brown.edu}
\urladdr{http://www.pitt.edu/\symbol{126}careycag/}
\author{Gunduz Caginalp}
\address{Mathematics Department, University of Pittsburgh, Pittsburgh, PA\ 15260}
\email{caginalp@pitt.edu}
\urladdr{http://www.pitt.edu/\symbol{126}caginalp/}
\date{\today}
\keywords{Volatility and Price Trend, Modelling Asset Price Dynamics, Price Extrema,
Market Tops and Bottoms}

\begin{abstract}
The relationship between price volatility and a market extremum is examined
using a fundamental economics model of supply and demand. By examining
randomness through a microeconomic setting, we obtain the implications of
randomness in the supply and demand, rather than assuming that price has
randomness on an empirical basis. Within a very general setting the volatility
has an extremum that precedes the extremum of the price. A key issue is that
randomness arises from the supply and demand, and the variance in the
stochastic differential equation governing the logarithm of price must reflect
this. Analogous results are obtained by further assuming that the supply and
demand are dependent on the deviation from fundamental value of the asset.

\end{abstract}
\maketitle

\section{Introduction}

\subsection{Overview}

In financial markets two basic entities are the expected relative price change and
volatility. The latter is defined as the standard deviation of relative price change
in a specified time period. The expected relative price change is, of course,
at the heart of finance, while volatility is central to assessing risk in a
portfolio. Volatility plays a central role in the pricing of options, which
are contracts whereby the owner acquires the right, but not the obligation, to
buy or sell at a particular price within a specified time interval.

In classical finance, it is generally assumed that relative price change is
random, but volatility is essentially constant for a particular asset \cite{BKM}.


\maketitle

In this way, price change and volatility are essentially decoupled in their treatment. 
In particular, the relative price change per unit time $P^{-1}dP/dt=d\log P/dt$
is given by a sum of a deterministic term that expresses the long term
estimate for the growth, together with a stochastic term given by Brownian motion.

Hence, the basic starting point for much of classical finance, particularly
options pricing (see e.g., \cite{BS,WHD}), is the stochastic equation
for $\log P$ as a function of $\omega\in\Omega$ (the sample space) and $t$
given by
\begin{equation}
d\log P=\mu dt+\sigma dW. \label{dpdt}%
\end{equation}
where $W$ is Brownian motion, with $\Delta W:=W\left(  t\right)  -W\left(
t-\Delta t\right)  \sim\mathcal{N}\left(  0,\Delta t\right)  ,$ so $W$ is
normal with variance $\Delta t$, mean $0,$ and independent increments (see
\cite{BI,SH}). While $\mu$ and $\sigma$ are often assumed to be
constant, one can also stipulate deterministic and time dependent or
stochastic $\mu$ and $\sigma.$ The stochastic differential equation above is
short for the integral form (suppressing $\omega$ in notation) for arbitrary
$t_{1}<t_{2}$
\begin{equation}
\log P\left(  t_{2}\right)  -\log P\left(  t_{1}\right)  =\int_{t_{1}}^{t_{2}%
}\mu dt+\int_{t_{1}}^{t_{2}}\sigma dW
\end{equation}
For $\mu,\sigma$ constant, and $\Delta t:=t_{2}-t_{1},$ one can write%
\begin{equation}
\Delta\log P:=\log P\left(  t_{2}\right)  -\log P\left(  t_{1}\right)
=\mu\Delta t+\sigma\Delta W.
\end{equation}

The classical equation (\ref{dpdt}) can be regarded as partly an empirical
model based on observations about volatility of prices. It also expresses the
theoretical construct of infinite arbitrage that eliminates significant distortions
from the expected return of the asset as a consequence of rational comparison
with other assets such as risk free government (i.e., Treasury) bonds. Hence, this equation can
be regarded as a limiting case (as supply and demand approach infinity) of
other equations involving finite supply and demand \cite{CD} (Appendix A). Thus, it does not lend
itself to modification based upon random changes in finite supply and demand. An
examination of the relationship between volatility and price trends, tops and
bottoms requires analysis of the more fundamental equations involving price
change. A suitable framework for analyzing these problems is the asset flow
approach based on supply/demand that have been studied in \cite{CB, CPS, MA1, MC},
 and references therein.

An intriguing question that we address is the following. Suppose there is an
event that is highly favorable for the fundamentals of an asset. There is the
expectation that there will be a peak and a turning point, but no one knows
when that will occur. By observing the volatility of price, can one determine
whether, and when, a peak will occur in the future? In general, our goal is to
delve deeper into the price change mechanism to understand the relationship
between relative price change and volatility.

Our starting point will be the basic supply/demand model of economics (see
e.g., \cite{WG, HG, PS}). We argue that there is always
randomness in supply and demand. However, for a \emph{given} supply and
demand, one cannot expect nearly the same level of randomness in the resulting
price. Indeed, for actively traded equities, there are many market makers
whose living consists of exploiting any price deviations from the optimal
price determined by the supply/demand curves at that moment. While there will
be no shortage of different opinions on the long term prospects of an
investment, each particular change in the supply/demand curve will produce a
clear, repeatable short term change in the price.

Given the broad validity of the Central Limit Theorem, one can expect that the
randomness in supply and demand of an actively traded asset on a given, small
time interval will be normally distributed. Thus, supply and demand can be
regarded as bivariate normally distributed random variables, with a
correlation that will be close to $-1$ since the random factors that increase
demand tend to decrease supply.

In Sections 2 and 3 we explore the implications of this basic price equation
that involves the ratio of demand/supply. By assuming that the supply and
demand are normally distributed with a ratio of means that are characterized
by a maximum, we prove that an maximum in the expectation of the
price is preceded by an extremum in the price volatility. This means that
given a situation in which one expects a market bottom based on fundamentals,
the variance or volatility can be a forecasting tool for the extremum in the
trading price. Furthermore, in pricing options, this approach shows that the
assumption of constant volatility can be improved by understanding the
relationship between the variance in price and the peaks and nadirs of
expected price.

Subsequently, in Section 3, we generalize the dependence on demand/supply in
the basic model, and find that under a broad set of conditions one has
nevertheless the result that the extremum in variance precedes the expected
price extremum.

In Section 4 we introduce the concept of price change that depends on supply
and demand through the fundamental value. The trader motivations are assumed
to be classical in that they depend only on fundamental value; however, the
price equation involves the finiteness of assets, which is a non-classical
concept. Without introducing non-classical concepts such as the dependence of
supply and demand on price trend, we obtain a similar relationship between the
volatility and the expected price.

In Section 5, we prove that within the assumptions of this model and 
generalizations, the peak of the expected log price occurs after the peak 
in volatility.
\subsection{General Supply/Demand model and stochastics}

We write the general price change model in terms of the price, $P,$ the
demand, $\tilde{D}$, and supply, $\tilde{S}$. In particular, the relative
price change is equal to a function of the excess demand, $\left(  \tilde
{D}-\tilde{S}\right)  /\tilde{S}$ (see e.g., \cite{WG}, \cite{HG}). That is,
we have
\begin{equation}
P^{-1}dP/dt=G\left(  \tilde{D}/\tilde{S}\right)  \label{price}%
\end{equation}
where $G:\mathbb{R}^{+}\rightarrow\mathbb{R}$ satisfies $\left(  a\right)  $
$G\left(  1\right)  =0,$ $\left(  b\right)  $ $G^{\prime}\left(  x\right)  >0$
for all $x\in\mathbb{R}^{+}.$ If symmetry between $\tilde{D}$ and $\tilde{S}$
is assumed, then one can also impose $\left(  c\right)  $ $G\left(
1/x\right)  =-G\left(  x\right)  .$ A prototype function with properties
$\left(  a\right)  -\left(  c\right)  $ is given by $G\left(  x\right)
:=x-1/x.$ 

A basic stochastic process based on $\left(  \ref{price}\right)  $ for $\log
P$ is defined by%
\begin{equation}
d\log P\left(  t,\omega\right)  =a\left(  t,\omega\right)  dt+b\left(
t,\omega\right)  dW\left(  t,\omega\right)  \label{abEqn}%
\end{equation}
for some functions $a$ and $b$ in $H_{2}\left[  0,T\right]  $, the space of
stochastic processes with a second moment integrable on $\left[  0,T\right]  $
(see \cite{SH}). The terms $a\left(  t,\omega\right)  $\ and $b\left(
t,\omega\right)  $ can be identified from $G$ and the nature of randomness
that is assumed. In any time interval $\Delta t,$ there is a random term in
$\tilde{D}$ and $\tilde{S}.$ The assumption is that there are a number of
agents who are motivated to place buy orders. The relative fraction is subject
to randomness so that the deterministic demand, $D(t),$ multiplied by
$1+\frac{\sigma}{2}R\left(  t;\omega\right)  $ for some random variable
$R\left(  t;\omega\right)$. Likewise, one has the deterministic supply,
$S\left(  t\right)  ,$ by $1-\frac{\sigma}{2}R\left( t;\omega\right)$. This
yields, for sufficiently small $\sigma$, the approximation
\begin{equation}
\frac{D\left(  t;\omega\right)  }{S\left(  t;\omega\right)  }-1\rightarrow
\frac{D\left(  t\right)  \left\{  1+\frac{\sigma}{2}R\right\}  }{S\left(
t\right)  \left\{  1-\frac{\sigma}{2}R\right\}  }-1\tilde{=}\frac{D\left(
t\right)  }{S\left(  t\right)  }-1+\frac{D\left(  t\right)  }{S\left(
t\right)  }\sigma R,\label{ds}%
\end{equation}
with $\sigma$ being either constant, time dependent or stochastic. We can then
write%
\[
G\left(  \tilde{D}/\tilde{S}\right)  \tilde{=}G\left(  D/S\right)  +G^{\prime
}\left(  D/S\right)  \left(  \sigma\frac{D}{S}R\right)
\]
and thereby identify $a\left(  t;\omega\right)  =G\left(  D/S\right)  $ and
$b\left(  t;\omega\right)  =\sigma\frac{D}{S}G^{\prime}\left(  D/S\right)  .$
Note that we view the randomness as arising only from the $\sigma R$ term, so
we can assume that $D$ and $S$ are deterministic functions of $t$ at this
point. Later on in this paper we consider additional dependence on $D$ and
$S.$ By assuming that the random variable $R$ is normal with variance $\Delta
t$ and $w\left(t+\Delta t\right) - w\left(t\right)$ is independent of 
$w\left(t\right)-w\left(t+\Delta t\right)$, one obtains the stochastic process below (in
which $D\left(  t\right)  $ and $S\left(  t\right)  $ are deterministic). 

By differentiating $\left(  c\right)$, we note
\[
\frac{1}{x}G^{\prime}\left(  \frac{1}{x}\right)  =xG^{\prime}\left(  x\right)
,
\]
and thereby write the stochastic differential equation%
\[
d\log P\left(  t,\omega\right)  =G\left(  D/S\right)  dt+\frac{1}{2}\left\{
\frac{D}{S}G^{\prime}\left(  \frac{D}{S}\right)  +\frac{S}{D}G^{\prime}\left(
\frac{S}{D}\right)  \right\}  dW\left(  t,\omega\right)  .
\]
In particular, for $G\left(  x\right)  :=x-1/x$ one has%
\[
d\log P=\left(  \frac{D}{S}-\frac{S}{D}\right)  dt+\sigma\left\{
\frac{D}{S}+\frac{S}{D}\right\}  dW.
\]

We are interested in the relationship between volatility and market extrema,
and focus on market tops by using the simpler equation for the function
$G\left(  x\right)  :=x-1$ for which $\left(  c\right)  $ holds only
approximately near $D/S=1.$ The equation is then (see Appendix)%

\begin{equation}
d\log P\left(  t,\omega\right)  =\left(  \frac{D\left(  t\right)  }{S\left(
t\right)  }-1\right)  dt+\sigma\left(  t,\omega\right)  \frac{D\left(
t\right)  }{S\left(  t\right)  }dW\left(  t,\omega\right)  .
\end{equation}
For market bottoms, one can obtain similar results (see Appendix).

We will specialize to $\sigma$ deterministic or even constant below. If we
were to assume that the supply and demand have randomness that is not
necessarily the negative of one another, then we can write instead,%
\begin{equation}
\frac{D\left(  1+\sigma_{a}R_{a}\right)  }{S\left(  1-\sigma_{b}R_{b}\right)
}\tilde{=}\left(  1+\sigma_{a}R_{a}+\sigma_{b}R_{b}\right)  \frac{D}{S}-1\ .
\end{equation}
yielding the analogous stochastic process,
\begin{equation}
d\log P\left(  t,\omega\right)  =\left(  \frac{D\left(  t\right)  }{S\left(
t\right)  }-1\right)  dt+\frac{D\left(  t\right)  }{S\left(  t\right)
}\left\{  \sigma_{a}dW_{a}+\sigma_{b}dW_{b}\right\}  .
\end{equation}

\subsection{Derivation of the stochastic equation}

We make precise the ideas above by starting again with (\ref{price})
where $D\left(  t;\omega\right)  $ and $S\left(  t;\omega\right)  $ are random
variables that are anticorrelated bivariate normals with means $\mu_{D}\left(
t\right)  $ and $\mu_{S}\left(  t\right)  $ and both have variance $\sigma
_{1}^{2}.$ We can regard the means as the deterministic part of the supply and
demand at any time $t$, so that with $\Sigma$ as the covariance matrix
\cite{TO}, we write
\begin{equation}
\left(  D\left(  t;\omega\right)  ,S\left(  t;\omega\right)  \right)
\sim\mathcal{N}\left(  \vec{\mu}\left(  t\right)  ,\Sigma\right)
\ \ with\ \ \vec{\mu}:=\left(  \mu_{D},\mu_{S}\right)  ,\ \Sigma:=\left(
\begin{array}
[c]{cc}%
\sigma_{1}^{2}\left(  t\right)  & -1\\
-1 & \sigma_{1}^{2}\left(  t\right)
\end{array}
\right)  .
\end{equation}
For any fixed $t$, one can show that the density of $D/S$ is given by
\begin{equation}
f_{D/S}\left(  x\right)  =\frac{1+\mu_{D}/\mu_{S}}{\sqrt{2\pi}\frac{\sigma
_{1}}{\mu_{S}}\left(  x+1\right)  ^{2}}e^{-\frac{1}{2}\frac{\left(  x-\mu
_{D}/\mu_{S}\right)  ^{2}}{\left(  \frac{\sigma_{1}}{\mu_{S}}\right)
^{2}\left(  x+1\right)  ^{2}}}.
\end{equation}
Other approximations in different settings have been studied in \cite{DR, H1, H2} and references therein.

For values of $x$ near the mean of $D/S$, one has%
\begin{equation}
\left(  x+1\right)  ^{2}\tilde{=}\left(  \frac{\mu_{D}}{\mu_{S}}+1\right)  ^{2}.
\end{equation}
We can use this to approximate the density, using $\ \sigma_{R_{q}}^{2}%
:=\frac{\sigma_{1}^{2}}{\mu_{S}^{2}}\left(  \frac{\mu_{D}}{\mu_{S}}+1\right)
^{2}$ as the approximate variance of $D/S,$ as%
\begin{equation}
f_{D/S}\left(  x\right)  \tilde{=}\frac{1}{\sqrt{2\pi}\sigma_{R_{q}}}%
e^{-\frac{\left(  x-\mu_{D}/\mu_{S}\right)  ^{2}}{2\sigma_{R_{q}}^{2}}%
};\ \ \ f_{\frac{D}{S}-1}\left(  x\right)  \tilde{=}\frac{1}{\sqrt{2\pi}%
\sigma_{R_{q}}}e^{-\frac{\left(  x-\mu_{D}/\mu_{S}+1\right)  ^{2}}{2\sigma_{R_{q}}%
^{2}}}.
\end{equation}
With this expression for the density of $R_{1}:=D/S-1,$ we can write the basic
supply/demand price change equation as%
\begin{equation}
\frac{\Delta\log P}{\Delta t}\tilde{=}R_{1}\sim\mathcal{N}\left(  \frac
{\mu_{D}}{\mu_{S}}-1,\sigma_{R_{q}}^{2}\right)  ,
\end{equation}
where each variable depends on $t$ and $\omega.$ Subtracting out the
$\frac{\mu_{D}}{\mu_{S}}-1$, defining $R_{0}\sim\mathcal{N}\left(
0,\sigma_{R_{q}}^{2}\right)  ,$ and noting that $R_{0}$ depends on $t$ through
$\sigma_{R}^{2},$ we write%
\begin{equation}
\Delta\log P\tilde{=}\left(  \frac{\mu_{D}}{\mu_{S}}-1\right)  \Delta
t+\sigma_{R}R_{0}\Delta t.
\end{equation}
By definition of Brownian motion, we can write%
\begin{equation}
\Delta\log P\tilde{=}\left(  \frac{\mu_{D}}{\mu_{S}}-1\right)  \Delta
t+\sigma_{R_{q}}\Delta W.
\end{equation}

With $\sigma_{1}$ and $\mu_{D}$ held constant, an increase in $\mu_{S}$ leads
to a decrease in the variance $\sigma_{R}.$ We would like to approximate this
under the condition that $\mu_{D}/\mu_{S}\approx1.$ By rescaling the units of
$\mu_{D}$, $\mu_{S},\sigma_{1}$ together and assuming that each of $\mu_{D}$
and $\mu_{S}$ are sufficiently close to $1$ that we can consider the leading
terms in a Taylor expansion, and write
\begin{equation}
\mu_{D}=1+\delta_{D},\text{ \ }\mu_{S}=1+\delta_{S}\ .
\end{equation}
Note that $\mu_{D}$ and $\mu_{S}$ will be nearly equal unless one is far from
equilibirium. Ignoring the terms higher than first order one has%
\begin{align}
\sigma_{R_{q}}^{2}  &  =\frac{\sigma_{1}^{2}}{\left(  1+\delta_{S}\right)  ^{2}%
}\left(  1+\frac{1+\delta_{D}}{1+\delta_{S}}\right)  ^{2}\nonumber\\
&  \tilde{=}4\sigma_{1}^{2}\left(  1-3\delta_{S}+\delta_{D}\right)  .
\end{align}
We are considering $-\delta_{S}=\delta_{D}=:\delta$ so that
\begin{equation}
\sigma_{R_{q}}^{2}=4\sigma_{1}^{2}\left(  1+4\delta\right)  .
\end{equation}
Using Taylor series approximation, one has
\begin{equation}
\left(  \frac{\mu_{D}}{\mu_{S}}\right)  ^{2}=\left(  \frac{1+\delta}{1-\delta
}\right)  ^{2}\tilde{=}1+4\delta.
\end{equation}
We can thus write the stochastic equation above as%
\begin{equation}
\Delta\log P\tilde{=}\left(  \frac{\mu_{D}}{\mu_{S}}-1\right)  \Delta
t+2\sigma_{1}\frac{\mu_{D}}{\mu_{S}}\Delta W,
\end{equation}
so that the differential form is given in terms of $f:=\mu_{D}/\mu_{S}-1$ by%
\begin{equation}
d\log P\left(  t\right)  =f\left(  t\right)  dt+\sigma\left(  f\left(
t\right)  +1\right)  dW\left(  t\right)  \label{stoch}%
\end{equation}
This is in agreement with the heuristic derivation above, with $\sigma
=2\sigma_{1}$ and $\sigma_{1}^{2}$ \ as the variance of each of $S$ and $D.$

\section{Location of maxima of Supply/Demand versus price}

\subsection{The deterministic model.}

We will show that if $D/S-1$ is given by a deterministic function $f$, then
the stochastic equation above will imply that the variance over a small time
interval $\Delta t$ will have an extremum before the price has its extremum.

Once we do this simplest case, it will generalize it to the situation where
$f:=D/S-1$ is also stochastic, and show that the same result holds.

To this end, first consider the simple, purely deterministic case:
\begin{equation}
P^{-1}\frac{dP}{dt}=\frac{D}{S}-1=:f,\ i.e.,\ \ \frac{d}{dt}\log P\left(
t\right)  =f\left(  t\right)  \label{stochasticEquation}%
\end{equation}

Assume that $f$ is a prescribed function of $t$ that is $C^{1}\left(
I\right)  $ for $I\supset$ $\left(  t_{0},\infty\right)  \supset\left(
t_{a},t_{b}\right)  $ satisfying:

$\left(  i\right)  $ $f\left(  t\right)  >0$ on $\left(  t_{a},t_{b}\right)
,\ f\left(  t\right)  <0$ on $I~\backslash\ \left(  t_{a},t_{b}\right)  $ and
$f+1>0$ on $I;$

$\left(  ii\right)  $ $f^{\prime}\left(  t\right)  >0$ if $t\,<t_{m}\ ,$
$f^{\prime}\left(  t\right)  <0$ if $t\,>t_{m}\ ,$ $f^{\prime}\left(
t_{m}\right)  =0;$

$\left(  iii\right)  $ $f^{\prime\prime}\left(  t\right)  <0$ if $t\in\left(
t_{a},t_{b}\right)  .$

Then $\log P\left(  t\right)  $ is increasing on $t\in\left(  t_{a}%
,t_{b}\right)  $ and decreasing on $t\in\left(  t_{b},\infty\right)  $ and has
a maximum at $t_{b}.$

In other words, the peak of $f$ occurs at $t_{m}$ while the peak of $\log P$
is attained at $t_{b}>t_{m}.$ This demonstrates the simple idea that price
peaks some time after the peak in demand/supply. In fact, during pioneering
experiments Smith, Suchanek and Williams \cite{SSW} observed that bids tend to
dry up shortly before a market peak. Also, the important role of the ratio of cash
to asset value in a market bubble that was predicted in \cite{CB} was
confirmed in experiments starting with \cite{CPS}.

\subsection{The stochastic model.}

Recall that $\mu_{D}$ and $\mu_{S}$ are deterministic functions of time only.
We model the problem as discussed above so the only randomness below is in the
$dW$ variable. The stochastic equation given by $\left(  \ref{stoch}\right)  $
for a continuous function $f:=\mu_{D}/\mu_{S}-1,$ in the integral form, for
any $t_{1}<t_{2}$ and $\Delta\log P:=\log P\left(  t_{2}\right)  -\log
P\left(  t_{1}\right)  $ is
\begin{equation}
\Delta\log P=\int_{t_{1}}^{t_{2}}f\left(  z\right)  dz+\int_{t_{1}}^{t_{2}%
}\sigma\left(  z\right)  \left(  f\left(  z\right)  +1\right)  dW\left(
z\right)  .
\end{equation}
Note that for the time being we are assuming that $\sigma$ and $f$ may depend
on time but are deterministic. We compute the expectation
\footnote{We let $\mathbb{E}\left[Y\right]^{2}$ denote 
$\mathbb{E}\left[\left(Y^{2}\right)\right]$.} and variance of this
quantity:%
\begin{equation}
E\left[  \Delta\log P\right]  =\int_{t_{1}}^{t_{2}}f\left(  z\right)  dz
\end{equation}
since $f$ is deterministic and $E\left[  dW\right]  =0;$

\begin{align}
Var\left[  \Delta\log P\right]   &  =E\left[  \int_{t_{1}}^{t_{2}}f\left(
z\right)  dz+\int_{t_{1}}^{t_{2}}\sigma\left(  z\right)  \left\{  f\left(
z\right)  +1\right\}  dW\left(  z\right)  \right]  ^{2}\nonumber\\
&  -\left(  E\left[  \int_{t_{1}}^{t_{2}}f\left(  z\right)  dz+\int_{t_{1}%
}^{t_{2}}\sigma\left(  z\right)  \left\{  f\left(  z\right)  +1\right\}
dW\left(  z\right)  \right]  \right)  ^{2}.\label{varLogP}
\end{align}
The $\int f\left(  z\right)  dz$ term is deterministic and vanishes when its
expectation is subtracted. The expecation of the $dW$ and the $dzdW$ terms
vanishes also. We are left with%
\begin{align}
Var\left[  \Delta\log P\right]   &  =E\left[  \int_{t_{1}}^{t_{2}}%
\sigma\left(  z\right)  \left\{  f\left(  z\right)  +1\right\}  dW\left(
z\right)  \right]  ^{2}\nonumber\\
&  =\int_{t_{1}}^{t_{2}}\sigma^{2}\left(  z\right)  \left\{  f\left(
z\right)  +1\right\}  ^{2}dz
\end{align}
using the standard result (\cite{SH}, p. 68).

We want to consider a small interval $\left(  t,t+\Delta t\right)  $ so we set
$t_{1}\rightarrow t$ and $t_{2}\rightarrow t+\Delta t$. We have%
\begin{align}
V\left(  t,t+\Delta t\right)   &  :=Var\left[  \log P\left(t+\Delta
t\right)  -\log P\left(  t\right)  \right] \nonumber\\
&  =\int_{t}^{t+\Delta t}\sigma^{2}\left(  z\right)  \left\{  f\left(
z\right)  +1\right\}  ^{2}dz.\\
\mathbb{V}\left(  t\right)   &  :=\lim_{\Delta t\rightarrow0}\frac{1}{\Delta
t}V\left(  t,t+\Delta t\right)  =\lim_{\Delta t\rightarrow0}\frac{1}{\Delta
t}\int_{t}^{t+\Delta t}\sigma^{2}\left(  z\right)  \left\{  f\left(  z\right)
+1\right\}  ^{2}dz\nonumber\\
&  =\sigma^{2}\left(  t\right)  \left\{  f\left(  t\right)  +1\right\}  ^{2}.
\label{boldVDef}%
\end{align}

\begin{example}
For $\sigma:=1,$ the maximum variance of $\Delta\log P$ will be when $\left\{
f\left(  z\right)  +1\right\}  ^{2}$ is at a maximum, which is when $f$ has
its maximum, i.e., at $t_{m}\ .$
\begin{equation}
\frac{d}{dt}\mathbb{V}\left(  t\right)  =\frac{d}{dt}\left\{  f\left(
t\right)  +1\right\}  ^{2}=2\left\{  f\left(  t\right)  +1\right\}  \frac
{d}{dt}f\left(  t\right)\label{dv}
\end{equation}
Since $1+f\left(  t\right)  >0$ in all cases, we see that the derivative of
$\mathbb{V}\left(  t\right)  $\ is of the same sign as the derivative of $f,$
so the limiting variance $\mathbb{V}\left(  t\right)  $ is increasing when $f$
is increasing and vice-versa. Recall that $\log P$ increases so long as $f>0,$
and decreases when $f<0.$ In other words, for the peak case, one has $f\left(
t\right)  >0$ if and only if $t\in\left(  t_{a},t_{b}\right)  $ with a maximum
at $t_{m}.$\ When $f$ has a peak, the maximum of $\ \mathbb{V}\left(
t\right)  $ will be at $t_{m}$ when $f\left(  t\right)  $ has its maximum.
\end{example}

To summarize, if the coefficient of $dW$ is $\sigma\left\{  1+f\left(
t\right)  \right\}  $ with $\sigma$ constant and $f$ has a maximum at
$t_{m}$ then $\mathbb{V}\left(  t\right)  $ will also have a maximum 
at $t_{m}$ so that the maximum in $E\log P$ will occur
after the maximum in $\mathbb{V}\left(  t\right)  $ since $\partial_{t}E\log
P\left(  t\right)  =f\left(  t\right).$

\begin{remark}
We have shown that $E\log P\left( t\right) $ has a
maximum, at some time $t_{m}$ that is preceded by a maximum in $\mathbb{V}%
\left( t\right) $. We can use this together with Jensen's inequality to show
that $E\left[ P\left( t_{m}\right) /P\left( t\right) \right] \geq 1$ for
arbitrary $t.$ Indeed, since $E\log P\left( t_{m}\right) \geq E\log P\left(
t\right) $ we can write%
\begin{equation}
E\log \frac{P\left( t_{m}\right) }{P\left( t_{1}\right) }\geq 0.
\end{equation}
Let $Y:=P\left( t_{m}\right) /P\left( t_{1}\right) $ and $g\left( x\right)
:=e^{x}$ in Jensen's inequality, $Eg\left( Y\right) \geq g\left( E\left[ Y%
\right] \right) $, we have%
\begin{equation}
EY=Ee^{\log Y}\geq e^{E\log Y}\geq 1.
\end{equation}
Hence, the expected ratio of price at $t_{m}$ to the price at 
any other point $t$ is greater than $1.$ 
\end{remark}

\begin{remark}
The conclusion above can be contrasted with the standard model 
$\left(\ref{dpdt}\right)$ adjusted so that $\mu\left(t\right):=\frac{\mu_{D}\left(t\right)}
{\mu_{s}\left(t\right)}$ has the same property of a peak at some 
time $t_{m}$. Performing the same calculation of 
$\left(\ref{varLogP}\right)$-$\left(\ref{dv}\right)$ 
for this model yields the result $\mathbb{V}\left(t\right)=\sigma^{2}$ 
so that it provides no information on the expected peak of prices.
\end{remark}

\section{Additional randomness In Supply and Demand}
\subsection{Stochastic Supply and Demand.}

Let $f:=D/S-1$ be a stochastic function such that $-1\leq Ef$ and $E\left\vert
f\right\vert \leq C_{1}$. With $X\left(  t\right)  :=\log P\left(  t\right)  $
and $\Delta X:=X\left(  t+\Delta t\right)  -X\left(  t\right)  ,$ we write the
SDE in differential and integral forms as%
\begin{equation}
dX=fdt+\sigma\left(  1+f\right)  dW\label{1a1}%
\end{equation}%
\begin{equation}
X\left(  t+\Delta t\right)  -X\left(  t\right)  =\int_{t}^{t+\Delta t}f\left(
s\right)  ds+\int_{t}^{t+\Delta t}\sigma\left(  s\right)  \left(  1+f\left(
s\right)  \right)  dW\left(  s\right)  .\label{1b1}%
\end{equation}
where we will assume $\sigma$ is a continuous, deterministic function of time,
though we can allow it to be stochastic in most of the sequel.

One has since $EdW=0$ and $E\left[  dsdW\right]  =0$ one obtains again the identities%
\begin{equation}
E\Delta X=\int_{t}^{t+\Delta t}Ef\left(  s\right)  ds,
\end{equation}%
\begin{align}
Var\left[  \Delta X\right]   &  =E\left[  \int fds+\int\sigma\left(
1+f\right)  dW\right]  ^{2}\nonumber\\
&  -\left(  E\left[  \int fds+\int\sigma\left(  1+f\right)  dW\right]
\right)  ^{2}\nonumber\\
&  =Var\left[  \int fds\right]  +2E\left[  \int fds\int\sigma\left(
1+f\right)  dW\right]  +E\left[  \int\sigma\left(  1+f\right)  dW\right]  ^{2}
\label{1a2}%
\end{align}
where all integrals are taken over the limits $t$ and $t+\Delta t$.

\begin{lemma}
Let $\sup_{\left[  0,T\right]  }E\left\vert f\right\vert ^{2}\leq C^{2}.$ Then
for some $C$ depending on this bound, one has
\begin{equation}
\left\vert E\int_{t}^{t+\Delta t}f\left(  s^{\prime}\right)  ds^{\prime}%
\int_{t}^{t+\Delta t}\sigma\left(  s\right)  \left\{  1+f\left(  s\right)
\right\}  dW\left(  s\right)  \right\vert \leq C\left(  \Delta t\right)
^{3/2}.
\end{equation}

\end{lemma}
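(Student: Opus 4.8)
The plan is to bound the cross term by the Cauchy--Schwarz inequality in $L^2(\Omega)$, splitting the product into a deterministic-looking Lebesgue integral factor and an It\^o integral factor, and then estimate each factor separately using the hypothesis $\sup_{[0,T]}E|f|^2\le C^2$ together with the It\^o isometry. Writing $A:=\int_t^{t+\Delta t}f(s')\,ds'$ and $B:=\int_t^{t+\Delta t}\sigma(s)\{1+f(s)\}\,dW(s)$, we have $|E[AB]|\le (E[A^2])^{1/2}(E[B^2])^{1/2}$, so it suffices to show $E[A^2]\le C'(\Delta t)^2$ and $E[B^2]\le C''\,\Delta t$; multiplying these bounds gives the claimed $(\Delta t)^{3/2}$.

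For the first factor, I would apply the Cauchy--Schwarz inequality to the time integral: $A^2=\left(\int_t^{t+\Delta t} f(s')\,ds'\right)^2\le \Delta t\int_t^{t+\Delta t} f(s')^2\,ds'$, and then take expectations and use Tonelli to get $E[A^2]\le \Delta t\int_t^{t+\Delta t} E[f(s')^2]\,ds'\le C^2(\Delta t)^2$. For the second factor, the It\^o isometry gives $E[B^2]=\int_t^{t+\Delta t}\sigma(s)^2\,E\big[(1+f(s))^2\big]\,ds$, and since $\sigma$ is continuous (hence bounded, say by $\sigma_{\max}$, on $[0,T]$) and $E[(1+f(s))^2]\le 2+2E[f(s)^2]\le 2+2C^2$ by the elementary inequality $(a+b)^2\le 2a^2+2b^2$, we obtain $E[B^2]\le \sigma_{\max}^2(2+2C^2)\,\Delta t$. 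Combining, $|E[AB]|\le \big(C^2(\Delta t)^2\big)^{1/2}\big(\sigma_{\max}^2(2+2C^2)\,\Delta t\big)^{1/2}=C(\Delta t)^{3/2}$ for an appropriate constant $C$ depending only on the stated bound (and on $\sup\sigma$).

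One technical point I would be careful about is that the It\^o isometry and the vanishing of $E[B]$ require the integrand $\sigma(s)\{1+f(s)\}$ to be adapted and in $H_2[0,T]$; this is exactly the standing assumption made when the SDE (\ref{1a1})--(\ref{1b1}) was set up, so it is available. A second point is the implicit use of Tonelli/Fubini to exchange $E$ and $\int ds$ in the first factor, which is justified by nonnegativity of $f(s')^2$. The main obstacle — though it is a mild one — is simply organizing the two-factor Cauchy--Schwarz split cleanly so that the deterministic $(\Delta t)$ factor from the Lebesgue integral and the square-root $(\Delta t)^{1/2}$ factor from the It\^o isometry combine to the stated exponent $3/2$; everything else is a routine application of standard $L^2$ estimates. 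I would end by remarking that the same argument shows $\mathrm{Var}[\int f\,ds]=O((\Delta t)^2)$, so that in the variance decomposition (\ref{1a2}) the It\^o term $E[B^2]=O(\Delta t)$ dominates and $\mathbb{V}(t)=\lim_{\Delta t\to0}\frac{1}{\Delta t}\mathrm{Var}[\Delta X]=\sigma^2(t)\,E[(1+f(t))^2]$, which is what the subsequent analysis will need.
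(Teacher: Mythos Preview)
Your proposal is correct and follows essentially the same approach as the paper: apply Cauchy--Schwarz in $L^2(\Omega)$ to split the product, bound $E\big(\int f\,ds'\big)^2\le C(\Delta t)^2$ via Cauchy--Schwarz in the time variable, and bound the It\^o term by the isometry to get $O(\Delta t)$. Your version is in fact more explicit than the paper's about the use of Tonelli and the elementary inequality $(1+f)^2\le 2+2f^2$, and your closing remark anticipating $\mathbb{V}(t)=\sigma^2(t)E[(1+f(t))^2]$ is exactly what the paper establishes next.
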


\begin{proof}
We apply the Schwarz inequality to obtain
\begin{align}
&  \left\vert E\int_{t}^{t+\Delta t}f\left(  s^{\prime}\right)  ds^{\prime
}\int_{t}^{t+\Delta t}\sigma\left(  s\right)  \left\{  1+f\left(  s\right)
\right\}  dW\left(  s\right)  \right\vert \label{*}\nonumber\\
&  \leq\left\{  E\left(  \int_{t}^{t+\Delta t}f\left(  s^{\prime}\right)
ds^{\prime}\right)  ^{2}\right\}  ^{1/2}\left\{  E\left(  \int_{t}^{t+\Delta
t}\sigma\left(  s\right)  \left\{  1+f\left(  s\right)  \right\}  dW\left(
s\right)  \right)  ^{2}\right\}  ^{1/2}.
\end{align}
We bound each of these terms. Using the Schwarz inequality on the $\int ds$
integral, we obtain using generic $C$ throughout,
\begin{equation}
E\left(  \int_{t}^{t+\Delta t}f\left(  s^{\prime}\right)  ds^{\prime}\right)
^{2}\leq C\left(  \Delta t\right)  ^{2}.\label{bound1}%
\end{equation}
The second term is bounded using the fact that $\sigma$ is deterministic,
\begin{align}
E\left(  \int_{t}^{t+\Delta t}\sigma\left(  s\right)  \left\{  1+f\left(
s\right)  \right\}  dW\left(  s\right)  \right)  ^{2} &  =\int_{t}^{t+\Delta
t}\sigma^{2}\left(  s\right)  E\left\{  1+f\left(  s\right)  \right\}
^{2}ds\nonumber\\
&  \leq C\Delta t.\label{bound2}%
\end{align}
Taking the square roots of (\ref{bound1}) and (\ref{bound2}), and combining
with $\left(  \ref{*}\right)  $ proves the lemma.
\end{proof}

\begin{lemma}
Let $\sigma$ be a continuous, deterministic function and assume $\sup_{\left[
0,T\right]  }E\left\vert f\right\vert ^{2}\leq C^{2}.$ Then%
\begin{equation}
\left\vert Var\left[  \Delta X\right]  -\int_{t}^{t+\Delta t}\sigma^{2}\left(
s\right)  E\left\{  1+f\left(  s\right)  \right\}  ^{2}ds\right\vert \leq
C\left(  \Delta t\right)  ^{3/2} \label{1a3}%
\end{equation}

\end{lemma}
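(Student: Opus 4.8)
The plan is to start from the three-term decomposition of the variance already recorded in $(\ref{1a2})$,
\[
Var\left[\Delta X\right]=Var\left[\int_{t}^{t+\Delta t}f\,ds\right]+2E\left[\int_{t}^{t+\Delta t}f\,ds\int_{t}^{t+\Delta t}\sigma(1+f)\,dW\right]+E\left[\int_{t}^{t+\Delta t}\sigma(1+f)\,dW\right]^{2},
\]
and to observe that the last term is \emph{exactly} the quantity subtracted in the statement. Indeed, since $\sigma$ is deterministic and continuous on $[0,T]$ and $\sup_{[0,T]}E|f|^{2}\le C^{2}$, the integrand $\sigma(s)\{1+f(s)\}$ lies in $H_{2}[0,T]$, so the Itô isometry (the identity already used in $(\ref{bound2})$) gives $E[\int_{t}^{t+\Delta t}\sigma(1+f)\,dW]^{2}=\int_{t}^{t+\Delta t}\sigma^{2}(s)E\{1+f(s)\}^{2}\,ds$; the same adaptedness and $L^{2}$ bound also guarantee the Itô integral has zero mean, which is what makes $(\ref{1a2})$ valid in the first place. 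Hence the target difference $Var[\Delta X]-\int_{t}^{t+\Delta t}\sigma^{2}E\{1+f\}^{2}\,ds$ equals precisely the sum of the first two terms in the decomposition, and it remains only to bound those.

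For the drift term, $Var[\int_{t}^{t+\Delta t}f\,ds]\le E(\int_{t}^{t+\Delta t}f\,ds)^{2}$, and applying the Schwarz inequality in the time variable — this is verbatim the bound $(\ref{bound1})$ — yields $Var[\int f\,ds]\le C(\Delta t)^{2}$. For the cross term, the previous Lemma applies directly and gives $|2E[\int_{t}^{t+\Delta t}f\,ds\int_{t}^{t+\Delta t}\sigma(1+f)\,dW]|\le 2C(\Delta t)^{3/2}$. Combining these two estimates by the triangle inequality,
\[
\left|Var[\Delta X]-\int_{t}^{t+\Delta t}\sigma^{2}(s)E\{1+f(s)\}^{2}\,ds\right|\le C(\Delta t)^{2}+2C(\Delta t)^{3/2},
\]
and since we only use this for small increments (in any case $\Delta t\le T$), we absorb $(\Delta t)^{2}\le T^{1/2}(\Delta t)^{3/2}$ into a new generic constant to obtain the claimed bound $C(\Delta t)^{3/2}$.

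I do not expect a genuine obstacle here: the argument is essentially assembly of the preceding two lemmas plus the Itô isometry. The only points that require care are (a) checking that the hypotheses on $\sigma$ and on $E|f|^{2}$ are exactly what is needed to place $\sigma(1+f)$ in $H_{2}[0,T]$, so that both the zero-mean property and the isometry identity hold, and (b) bookkeeping of the generic constant $C$, which here depends on $\sup_{[0,T]}|\sigma|$, on the bound $C^{2}$ for $E|f|^{2}$, and on $T$ (the last only through the harmless reduction $(\Delta t)^{2}\le T^{1/2}(\Delta t)^{3/2}$). No estimate beyond $(\ref{bound1})$, $(\ref{bound2})$, and the previous Lemma is invoked.
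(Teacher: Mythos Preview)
Your proposal is correct and follows essentially the same route as the paper: use the decomposition $(\ref{1a2})$, identify the third term via the It\^o isometry as exactly the integral being subtracted, then bound the remaining drift variance and cross term by $(\ref{bound1})$ and the previous Lemma respectively. The paper's own proof is much terser (it just cites ``basic stochastic analysis,'' $(\ref{1a2})$, and $f\in H_{2}[0,T]$), but the substance is identical.
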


\begin{proof}
Basic stochastic analysis yields%
\begin{equation}
E\left(  \int_{t}^{t+\Delta t}\sigma^{2}\left(  s\right)  \left\{  1+f\left(
s\right)  \right\}  dW\right)  ^{2}=\int_{t}^{t+\Delta t}\sigma^{2}\left(
s\right)  E\left\{  1+f\left(  s\right)  \right\}  ^{2}ds.
\end{equation}
Thus, using (\ref{1a2}) and $f\in H_{2}\left[  0,T\right]  $ we have the
result (\ref{1a3}).
\end{proof}

\bigskip

Now, we would like to determine the maximum of $\mathbb{V}\left(  t\right)  $
and show that it precedes the maximum of the expected log price.$\ $From the calculations
above, one has

\begin{lemma}
\label{Lem extremumv}In the general case, assuming $E\left\vert f\left(
t\right)  \right\vert ^{2}<C^{2}$ on $t\in\left[  0,T\right]  $ but allowing
stochastic $\sigma$ such that $E\sigma^{2}<C$ one has
\begin{equation}
\mathbb{V}\left(  t\right)  :=\lim_{\Delta t\rightarrow0}\frac{1}{\Delta
t}V\left(  t,t+\Delta t\right)  =E\left[  \sigma^{2}\left(  t\right)  \left(
1+f\left(  t\right)  \right)  ^{2}\right]  .
\end{equation}

\end{lemma}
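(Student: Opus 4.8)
The plan is to compute $\mathbb{V}(t)$ directly from the definition $\mathbb{V}(t):=\lim_{\Delta t\to 0}\frac{1}{\Delta t}V(t,t+\Delta t)$, where $V(t,t+\Delta t):=\mathrm{Var}[\Delta X]$, using the decomposition of $\mathrm{Var}[\Delta X]$ into three pieces already recorded in $(\ref{1a2})$: the $\mathrm{Var}[\int f\,ds]$ term, the cross term $2E[\int f\,ds\int\sigma(1+f)\,dW]$, and the pure stochastic term $E[\int\sigma(1+f)\,dW]^2$. First I would note that the cross term is $O((\Delta t)^{3/2})$ by the first Lemma (now allowing stochastic $\sigma$ with $E\sigma^2<C$, the Schwarz-inequality bound goes through with the obvious modification, absorbing $\sigma$ into the stochastic factor). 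Second, the variance of the drift integral $\int_t^{t+\Delta t}f(s)\,ds$ is $O((\Delta t)^2)$: by Jensen/Schwarz $E(\int f\,ds)^2\le \Delta t\int E f^2\,ds\le C^2(\Delta t)^2$, and subtracting the square of the mean only decreases this. So upon dividing by $\Delta t$ both of these contributions vanish as $\Delta t\to 0$.

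The third term is the one that produces the answer. When $\sigma$ is deterministic, the Itô isometry gives $E[\int\sigma(1+f)\,dW]^2=\int_t^{t+\Delta t}\sigma^2(s)E\{1+f(s)\}^2\,ds$ exactly (this is the content of the second Lemma). For stochastic $\sigma$ one instead invokes the Itô isometry for the general integrand $\sigma(s)(1+f(s))\in H_2[0,T]$, obtaining $E[\int\sigma(1+f)\,dW]^2=\int_t^{t+\Delta t}E[\sigma^2(s)(1+f(s))^2]\,ds$, which requires $E[\sigma^2(1+f)^2]$ to be integrable on $[0,T]$; this follows from $E\sigma^2<C$ and $E|f|^2<C^2$ together with a Schwarz inequality (or Cauchy–Schwarz in the form $E[\sigma^2(1+f)^2]\le (E\sigma^4)^{1/2}(E(1+f)^4)^{1/2}$, which needs fourth moments — so more cleanly one should either assume the needed integrability or note $\sigma$ is bounded, as the paper tacitly does). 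Then dividing by $\Delta t$ and applying the Lebesgue differentiation theorem (or simply continuity, if $s\mapsto E[\sigma^2(s)(1+f(s))^2]$ is continuous) yields $\lim_{\Delta t\to 0}\frac{1}{\Delta t}\int_t^{t+\Delta t}E[\sigma^2(s)(1+f(s))^2]\,ds=E[\sigma^2(t)(1+f(t))^2]$.

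Combining the three estimates: $\frac{1}{\Delta t}V(t,t+\Delta t)=\frac{1}{\Delta t}\int_t^{t+\Delta t}E[\sigma^2(s)(1+f(s))^2]\,ds+O((\Delta t)^{1/2})+O(\Delta t)$, and letting $\Delta t\to 0$ gives $\mathbb{V}(t)=E[\sigma^2(t)(1+f(t))^2]$, as claimed.

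The main obstacle, and the only place requiring genuine care, is justifying the Itô isometry when $\sigma$ is stochastic and simultaneously ensuring the resulting integrand $E[\sigma^2(1+f)^2]$ is finite and that one may differentiate the integral — i.e., that $s\mapsto E[\sigma^2(s)(1+f(s))^2]$ is well-behaved (continuous, or at least locally integrable) at $t$. The cleanest route is to state that $\sigma$ is a continuous adapted process with $\sup_{[0,T]}E\sigma^2<C$ and that $f$ satisfies $\sup_{[0,T]}E|f|^2<C^2$ with enough joint integrability (e.g. $\sigma$ bounded) so that $\sigma(1+f)\in H_2[0,T]$ and the integrand is continuous in mean; then the differentiation step is immediate. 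The drift-variance and cross-term estimates are routine applications of Schwarz exactly as in the two preceding Lemmas and present no difficulty.
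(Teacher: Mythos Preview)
Your proposal is correct and follows essentially the same route as the paper: the paper presents this lemma as an immediate consequence of the decomposition $(\ref{1a2})$ together with Lemmas 3.1 and 3.2, which is precisely your three-term analysis (drift variance $O((\Delta t)^2)$, cross term $O((\Delta t)^{3/2})$, and It\^o isometry on the $dW$ term). You are in fact more careful than the paper about the passage from deterministic to stochastic $\sigma$: the paper states Lemma 3.2 only for deterministic $\sigma$ and then silently extends to stochastic $\sigma$ in Lemma \ref{Lem extremumv}, whereas you correctly flag that the It\^o isometry then requires $\sigma(1+f)\in H_2[0,T]$ and some joint integrability to make $s\mapsto E[\sigma^2(s)(1+f(s))^2]$ well-defined and differentiable.
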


\begin{lemma}
Suppose $\sup_{[0,T]}E\left\vert f\left(  t\right)  \right\vert ^{2}<C^{2}$
$\ $and $\sigma$ is a deterministic continuous function on $t\in\left[
0,T\right]  $ then one has%
\begin{equation}
\mathbb{V}\left(  t\right)  =\sigma^{2}\left\{  1+Ef\right\}  ^{2}+\sigma
^{2}Varf.
\end{equation}
and the extrema of $\ \mathbb{V}\left(  t\right)  $ occur at $t$ such that
\begin{equation}
2\sigma\sigma^{\prime}\left\{  \left[  1+Ef\right]  ^{2}+Varf\right\}
+\sigma^{2}\left\{  2\left[  1+Ef\right]  \left(  Ef\right)  ^{\prime}+\left(
Varf\right)  ^{\prime}\right\}  =0.
\end{equation}

\end{lemma}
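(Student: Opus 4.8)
The plan is to start from Lemma~\ref{Lem extremumv}, which already establishes $\mathbb{V}(t) = E\left[\sigma^{2}(t)(1+f(t))^{2}\right]$ even allowing stochastic $\sigma$. Specializing to the case at hand, the first step is simply to pull the deterministic factor $\sigma^{2}(t)$ out of the expectation, giving $\mathbb{V}(t) = \sigma^{2}(t)\,E\left[(1+f(t))^{2}\right]$. Next I would expand the square inside the expectation, $E\left[(1+f)^{2}\right] = 1 + 2Ef + Ef^{2}$, and then rewrite the second moment through the elementary variance identity $Ef^{2} = \mathit{Var}\,f + (Ef)^{2}$, which is legitimate because the hypothesis $\sup_{[0,T]}E|f|^{2} < C^{2}$ guarantees that $Ef$, $Ef^{2}$, and hence $\mathit{Var}\,f$ are all finite for each $t$. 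Collecting terms yields $E\left[(1+f)^{2}\right] = (1+Ef)^{2} + \mathit{Var}\,f$, and multiplying by $\sigma^{2}$ produces the claimed formula $\mathbb{V}(t) = \sigma^{2}\{1+Ef\}^{2} + \sigma^{2}\mathit{Var}\,f$.

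For the second assertion I would differentiate $\mathbb{V}(t) = \sigma^{2}(t)\{(1+Ef)^{2} + \mathit{Var}\,f\}$ in $t$ by the product rule, treating $Ef(t)$ and $\mathit{Var}\,f(t)$ as the deterministic functions of time that they are. The derivative of $\sigma^{2}$ is $2\sigma\sigma'$, and the derivative of the bracket is $2(1+Ef)(Ef)' + (\mathit{Var}\,f)'$ by the chain rule, so $\frac{d}{dt}\mathbb{V}(t) = 2\sigma\sigma'\{(1+Ef)^{2} + \mathit{Var}\,f\} + \sigma^{2}\{2(1+Ef)(Ef)' + (\mathit{Var}\,f)'\}$. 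Setting this equal to zero is exactly the stated extremum condition; thus any interior extremum of $\mathbb{V}$ on $[0,T]$ must solve this equation.

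The only genuine point requiring care — and it is a minor one — is the justification of the interchanges and differentiations: that $\sigma$ being deterministic lets it leave the expectation, that the moment bound makes $Ef$, $Ef^{2}$, $\mathit{Var}\,f$ well-defined pointwise in $t$, and that these moment functions are $C^{1}$ in $t$ so the extremum calculation is meaningful. In the purely deterministic $f$ setting of Section~2 this smoothness was automatic; in the present generality one either assumes $Ef\in C^{1}$ and $\mathit{Var}\,f\in C^{1}$ outright or notes it follows from the modelling assumptions, after which everything reduces to the algebra and chain-rule computation above. I do not anticipate any obstacle beyond this bookkeeping.
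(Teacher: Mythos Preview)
Your proposal is correct and follows essentially the same route as the paper: invoke Lemma~\ref{Lem extremumv}, pull the deterministic $\sigma^{2}$ outside the expectation, expand $(1+f)^{2}$, add and subtract $(Ef)^{2}$ to convert $Ef^{2}$ into $\mathit{Var}\,f + (Ef)^{2}$, and then differentiate the resulting expression by the product and chain rules. The paper's proof is just a one-line version of the same computation; your added remarks on the smoothness needed to justify the differentiation are reasonable caveats that the paper leaves implicit.
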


\begin{proof}
Using Lemma \ref{Lem extremumv}, we write
\begin{align}
\mathbb{V}\left(  t\right)   &  =\sigma^{2}E\left[  1+2f+f^{2}\right]
=\sigma^{2}\left\{  1+2Ef+\left(  Ef\right)  ^{2}+Ef^{2}-\left(  Ef\right)
^{2}\right\} \nonumber\\
&  =\sigma^{2}\left(  1+Ef\right)  ^{2}+\sigma^{2}Varf.
\end{align}
Differentiation implies the second assertion.
\end{proof}

\bigskip

\begin{lemma}
Suppose $E\left\vert f\left(  t\right)  \right\vert ^{2}<C^{2}$ on
$t\in\left[  0,T\right]  $, while $\sigma$ and $Var\left[  f\left(  t\right)
\right]  $ are constant in $t.$ Then the extremum of $\ \mathbb{V}\left(
t\right)  $ occur for $t$ such that%
\begin{equation}
\frac{d}{dt}Ef\left(  t\right)  =0.
\end{equation}

\end{lemma}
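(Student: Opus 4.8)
The plan is to obtain the claim as an immediate corollary of the preceding lemma. That lemma already supplies the closed form $\mathbb{V}(t)=\sigma^{2}\{1+Ef\}^{2}+\sigma^{2}\,\mathrm{Var}\,f$ under $\sup_{[0,T]}E\left\vert f(t)\right\vert^{2}<C^{2}$, so I would simply specialize it. Since we now additionally assume that $\sigma$ and $\mathrm{Var}[f(t)]$ are constant in $t$, all of the $t$-dependence of $\mathbb{V}$ is carried by the single factor $(1+Ef(t))^{2}$, and the first step is to differentiate:
\[
\frac{d}{dt}\mathbb{V}(t)=\sigma^{2}\frac{d}{dt}\bigl(1+Ef(t)\bigr)^{2}=2\sigma^{2}\bigl(1+Ef(t)\bigr)\frac{d}{dt}Ef(t),
\]
using that $t\mapsto Ef(t)$ is differentiable (it is the deterministic drift appearing in the SDE, which is $C^{1}$ under the running hypotheses inherited from the earlier sections).

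Next I would show the prefactor $2\sigma^{2}(1+Ef(t))$ never vanishes, so that $\frac{d}{dt}\mathbb{V}(t)=0$ is equivalent to $\frac{d}{dt}Ef(t)=0$. That $\sigma\neq 0$ is part of the setup (the degenerate noise-free case $\sigma=0$ gives $\mathbb{V}\equiv 0$ and carries no information). For $1+Ef(t)$, recall that $f=D/S-1$ with $D,S>0$, so $f>-1$ pointwise, hence $1+Ef(t)=E[1+f(t)]>0$ on $[0,T]$; it is this strict positivity, rather than the weaker $-1\le Ef$ used only to secure the integrability bounds, that the clean statement needs. Combining, $\frac{d}{dt}\mathbb{V}(t)=0\iff\frac{d}{dt}Ef(t)=0$, which is the assertion.

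The proof is short, and the only point requiring care — hence the main, and essentially the only, obstacle — is the boundary case $1+Ef(t_{\ast})=0$: at such a $t_{\ast}$ one would have $\frac{d}{dt}\mathbb{V}(t_{\ast})=0$ automatically, without $Ef$ being critical there, so the clean ``if and only if'' genuinely depends on ruling this out via $1+Ef>0$. Once that observation is in place, the conclusion follows directly from the previous lemma with no further computation.
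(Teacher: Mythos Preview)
Your proposal is correct and follows essentially the same route as the paper: invoke the preceding lemma's formula $\mathbb{V}(t)=\sigma^{2}(1+Ef)^{2}+\sigma^{2}\mathrm{Var}\,f$, differentiate with $\sigma$ and $\mathrm{Var}\,f$ constant to get $2\sigma^{2}(1+Ef)\,(Ef)'$, and use $1+f>0\Rightarrow 1+Ef>0$ to conclude the equivalence. Your discussion of the boundary case $1+Ef=0$ is a welcome clarification but not a different idea.
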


\begin{proof}
From the previous Lemma, we have $V\left(  t,t+\Delta t\right)  :=\int
_{t}^{t+\Delta t}\sigma^{2}\left(  s\right)  E\left[  1+f\left(  s\right)
\right]  ^{2}ds$, yielding%
\begin{equation}
\lim_{\Delta t\rightarrow0}\frac{1}{\Delta t}V\left(  t,t+\Delta t\right)
=\sigma^{2}\left(  1+Ef\left(  t\right)  \right)  ^{2}+Var\left[  f\left(
t\right)  \right]
\end{equation}
Since we are assuming that $Var\left[  f\left(  t\right)  \right]  $ is
constant in time, we obtain%
\begin{align}
\frac{\partial}{\partial t}\lim_{\Delta t\rightarrow0}V\left(  t,t+\Delta
t\right)   &  =\frac{\partial}{\partial t}\left\{  \sigma^{2}\left(
1+Ef\left(  t\right)  \right)  ^{2}\right\} \nonumber\\
&  =2\sigma^{2}\left(  1+Ef\left(  t\right)  \right)  \frac{d}{dt}Ef\left(
t\right)  .
\end{align}
Thus, the right-hand side vanishes if and only if $\frac{d}{dt}Ef\left(  t\right)  =0,$ i.e., at
$t_{m}$ (by definition of $t_{m}$). Note that we have $1+f>0$ so that
$1+Ef>0.$
\end{proof}

\subsection{Properties of $f$}

The condition $E\left\vert f\right\vert ^{2}<C^{2}$ is easily satisfied by
introducing randomness in many forms. For the Lemma above, we would also like
to satisfy $Var\left[  f\left(  t\right)  \right]  =const.$

Another way of attaining this (up to exponential order) is to define $f$ as
the stochastic process%
\begin{equation}
df\left(  t\right)  =\mu_{f}\left(  t\right)  dt+\sigma_{f}\left(  t\right)
dW\left(  t\right)
\end{equation}
where $\mu_{f}$ and $\sigma$ are both time dependent but deterministic.

We can assume that $f\left(  t_{0}\right)  $ is a given, fixed value, and
obtain (see e.g., \cite{BI}, \cite{SH})
\begin{equation}
Var\left[  f\left(  t\right)  \right]  =E\left[  \int_{t_{0}}^{t}\sigma
_{f}\left(  s\right)  dW\left(  s\right)  \right]  ^{2}=\int_{t_{0}}^{t}%
\sigma_{f}^{2}\left(  s\right)  ds
\end{equation}
since $\sigma_{f}\left(  s\right)  $ is deterministic$.$

In particular, if one has $\sigma_{f}\left(  s\right)  :=e^{-s/2}$, then
$Var\left[  f\left(  t\right)  \right]  \leq e^{-t_{0}}$ while $\int
_{0}^{t_{0}}\sigma\left(  s\right)  ds=1-e^{-t_{0}}$ so one has approximately
constant variance for $t\geq t_{0}$ for large $t_{0}$. In particular, one has%
\begin{equation}
\frac{d}{dt}Var\left[  f\left(  t\right)  \right]  =\frac{d}{dt}\int_{t_{0}%
}^{t}\sigma_{f}^{2}\left(  s\right)  ds=\sigma_{f}^{2}\left(  t\right)
=e^{-t}.
\end{equation}

\subsection{General coefficient of $dW$}

The stochastic differential equation (\ref{1a1}) entails a coefficient of $dW$
that is proportional to $D/S.$ One can also consider the implications of a
coefficient that is proportional to the excess demand $D/S-1$ or a monomial of
it. More generally, we can write $h\left(  t\right)  :=g\left(  f\left(
t\right)  \right)  $ for an arbitrary continuous function $g$ leading to the
stochastic differential equation
\begin{equation}
d\log P=fdt+\sigma hdW,
\end{equation}
where $\sigma$ can also be stochastic or deterministic function of time.

From this stochastic equation one has immediately%
\begin{equation}
\frac{dE\left[  \log P\right]  }{dt}=Ef
\end{equation}
similar to the completely deterministic model, except that $f$ is replaced by
$Ef.$

From the integral version of the stochastic model, we can write the
expectation and variance as%
\begin{equation}
E\left[  \Delta\log P\right]  =\int_{t}^{t+\Delta t}Ef\left(  s\right)  ds
\end{equation}%
\begin{align}
V\left(  t,t+\Delta t\right)   &  :=Var\left[  \Delta\log P\right]
=Var\left[  \int_{t}^{t+\Delta t}f\left(  s\right)  ds\right]  +2\mathbb{E}%
\left[  \int_{t}^{t+\Delta t}\sigma\left(  s\right)  h\left(  s\right)
dW\left(  s\right)  \right] \nonumber\\
&  +\int_{t}^{t+\Delta t}E\left[  \sigma\left(  s\right)  h\left(  s\right)
\right]  ^{2}ds.
\end{align}
The middle term on the right-hand side vanishes while the first 
term is of order $\left(\Delta t\right)^{2}$, yielding the following relation for $\mathbb{V}\left(
t\right)$.

\begin{lemma}
Let $h\left(  t\right)  :=g\left(  f\left(  t\right)  \right)  \ and$ $\sigma$
satisfy $Eh^{2}<C,$ $E\sigma^{2}<C$. Then one has
\begin{equation}
\mathbb{V}\left(  t\right)  :=\lim_{\Delta t\rightarrow0}\frac{1}{\Delta
t}V\left(  t,t+\Delta t\right)  =E\left[  \sigma\left(  t\right)  h\left(
t\right)  \right]  ^{2}. \label{E}%
\end{equation}

\end{lemma}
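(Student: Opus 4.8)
The plan is to mirror the computation already carried out for the linear case $h=1+f$ (Lemmas above), only now keeping $h=g(f)$ general. Starting from the integral form
\[
\Delta\log P=\int_{t}^{t+\Delta t}f(s)\,ds+\int_{t}^{t+\Delta t}\sigma(s)h(s)\,dW(s),
\]
I would expand $\mathrm{Var}[\Delta\log P]$ into the three pieces displayed just before the statement: the variance of the drift integral, the cross term, and the It\^o integral term. The first step is to show the cross term $2\mathbb{E}\big[\int f\,ds\int \sigma h\,dW\big]$ vanishes, or at least is $o(\Delta t)$: this follows exactly as in the first Lemma of Section 3 via the Schwarz inequality, using $\mathbb{E}|f|^{2}$ bounded and, now, $\mathbb{E}h^{2}=\mathbb{E}g(f)^{2}<C$ together with $\mathbb{E}\sigma^{2}<C$ — the bound is $C(\Delta t)^{3/2}$, hence negligible after dividing by $\Delta t$.

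The second step handles the drift-variance term $\mathrm{Var}\big[\int_{t}^{t+\Delta t}f(s)\,ds\big]$. By the Schwarz inequality $\big(\int_{t}^{t+\Delta t}f\,ds\big)^{2}\le \Delta t\int_{t}^{t+\Delta t}f(s)^{2}\,ds$, so its expectation is $\le \Delta t\int_{t}^{t+\Delta t}\mathbb{E} f(s)^{2}\,ds\le C(\Delta t)^{2}$; subtracting $(\mathbb{E}\int f\,ds)^{2}\ge0$ only makes it smaller, so this term is $O((\Delta t)^{2})=o(\Delta t)$.

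The third step is the main term. By the It\^o isometry (valid because $\sigma h\in H_{2}[0,T]$, which is what the hypotheses $\mathbb{E}h^{2}<C$, $\mathbb{E}\sigma^{2}<C$ secure — one may need Cauchy--Schwarz once more to get $\mathbb{E}(\sigma h)^{2}<\infty$ from the two separate bounds if $\sigma$ is stochastic, or it is immediate if $\sigma$ is deterministic),
\[
\mathbb{E}\left(\int_{t}^{t+\Delta t}\sigma(s)h(s)\,dW(s)\right)^{2}=\int_{t}^{t+\Delta t}\mathbb{E}\big[\sigma(s)h(s)\big]^{2}\,ds.
\]
Then $\frac{1}{\Delta t}\int_{t}^{t+\Delta t}\mathbb{E}[\sigma(s)h(s)]^{2}\,ds\to \mathbb{E}[\sigma(t)h(t)]^{2}$ as $\Delta t\to0$ by the Lebesgue differentiation theorem, provided $s\mapsto \mathbb{E}[\sigma(s)h(s)]^{2}$ is continuous (or at least locally integrable with $t$ a Lebesgue point); continuity of $\sigma$ and of $g$ together with an appropriate continuity/integrability assumption on $f$ gives this. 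Combining the three steps, $\mathbb{V}(t)=\lim_{\Delta t\to0}\frac1{\Delta t}V(t,t+\Delta t)=\mathbb{E}[\sigma(t)h(t)]^{2}$, which is (\ref{E}).

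The step I expect to be the mild obstacle is the third one: the passage to the limit requires knowing that $t$ is a Lebesgue point of $s\mapsto\mathbb{E}[\sigma(s)h(s)]^{2}$, and since $h=g(f)$ with $f$ possibly only an $H_{2}$ process and $g$ merely continuous, one should either invoke continuity of $s\mapsto\mathbb{E}[\sigma(s)h(s)]^{2}$ (which is the implicit running hypothesis, consistent with how $\sigma$ and $f$ are treated as continuous in time throughout Section 3) or state the conclusion as holding for a.e.\ $t$. Everything else is a routine repetition of the Schwarz-inequality and It\^o-isometry estimates already established in the two Lemmas preceding Lemma \ref{Lem extremumv}.
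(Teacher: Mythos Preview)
Your proposal is correct and follows essentially the same route as the paper: the paper's justification (the paragraph immediately preceding the lemma) decomposes $V(t,t+\Delta t)$ into the drift-variance term, the cross term, and the It\^o-isometry term, asserts the middle term vanishes and the first is $O((\Delta t)^{2})$, and reads off the limit from the third term. If anything you are more careful than the paper---you correctly treat the cross term as $2\mathbb{E}\big[\int f\,ds\int\sigma h\,dW\big]$ and bound it via Schwarz (the paper's displayed middle term is written as $2\mathbb{E}\big[\int\sigma h\,dW\big]$, which is zero for a simpler reason), and you flag the Lebesgue-point issue in the passage to the limit, which the paper leaves implicit in its continuity assumptions.
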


Next, we examine whether $\mathbb{V}\left(  t\right)  $ occurs prior to the
maximum of $\log P\left(  t\right)$ in several examples.

\begin{example}
Consider the function $g\left(  z\right)  =z^{q}$ where $q\in\mathbb{N}$. Let
$\sigma:=1$ and $f\in L^{2}\left[  0,t\right]  $ be deterministic. From the
Lemma above, we obtain%
\begin{equation}
\mathbb{V}\left(  t\right)  =h\left(  t\right)  ^{2}=f\left(  t\right)
^{2q},\ \ \frac{d}{dt}\mathbb{V}\left(  t\right)  =2qf\left(  t\right)
^{2q-1}\frac{d}{dt}f\left(  t\right)  .
\end {equation}

When $f:=D/S-1$ has a maximum, note that on some interval
$\left(  t_{a},t_{b}\right)  $ it is positive (as demand exceeds supply) and
$f$ has its maximum for some value $t_{m}\in\left(  t_{a},t_{b}\right)  .$ The
identity above implies that $\mathbb{V}\left(  t\right)  $ has a maximum when
$f$ has a maximum. Also, the defining stochastic equation above implies $E\log
P$ has its maximum at $t_{b}>t_{m}.$
\end{example}

\begin{example}
(Symmetry between $D$ and $S$ and more general coefficients) If we hypothesize
that the level of noise is proportional essentially to the magnitude (or its
square) of the difference between $D$ and $S$ divided by the sum (which is a
proxy for trading volume), then we can write that coefficient as
\begin{equation}
\sigma\frac{\left(  D-S\right)  ^{2}}{\left(  D+S\right)  ^{2}}.
\end{equation}
We can consider a more general case in which we write, for example, for
$\sigma=const,$%
\begin{equation}
d\log P\left(  t\right)  =\left(  \frac{D}{S}-1\right)  dt+\sigma\left(
\frac{D-S}{D+S}\right)  ^{p}dW
\end{equation}
where $p\in\mathbb{N}$ can be either even or odd. Note that we can write all
terms as functions of $f:=D/S-1,$ so $f+2=D/S+1>0$ since $D$ and $S$ are
positive, and we have%
\begin{equation}
d\log P\left(  t\right)  =fdt+\sigma\left(  \frac{f}{f+2}\right)  ^{p}dW.
\end{equation}
We write%
\begin{equation}
\mathbb{V}\left(  t\right)  :=\lim_{\Delta t\rightarrow0}\frac{V\left(
t,t+\Delta t\right)  }{\Delta t}=E\left[  \sigma\left(  t\right)
\left( \frac{f\left(  t\right)  }{f\left(  t\right)  +2}\right)^{p}\right]  ^{2}%
\end{equation}
If $f$ is deterministic and $\sigma$ is constant, we have upon
differentiation,%
\begin{equation}
\frac{d}{dt}\mathbb{V}\left(  t\right)  =4p\sigma^{2}\frac{f^{2p-1}}{\left[
f+2\right]  ^{2p+1}}\frac{df}{dt}%
\end{equation}
Recalling $f+2>0$ the sign of $\frac{d}{dt}\mathbb{V}$ depends only on
$f^{2p-1}df/dt.$ Notice that it makes no difference whether $p$ is even or odd.

If $f$ has a single maximum at $t_{m}\in\left(
t_{a},t_{b}\right)  $ such that $f\left(  t\right)  >0$ iff $t\in\left(
t_{a},t_{b}\right)  $, and $f<0$ iff $t\not \in \left[  t_{a},t_{b}\right]  $
then we have a relative maximum in $\mathbb{V}$ at $t_{m}$.

Hence, we see that if the coefficient of $dW$ is a deterministic term of the
form $\left(  \left(  D-S\right)  /(D+S)\right)  ^{p}$ and $f$ has a maximum, 
whether $p$ is even or odd (i.e., the coefficient
increases or decreases with excess demand), then the limiting volatility
$\mathbb{V}$ also has a maximum.
\end{example}

\begin{example}
Generalizing this concept further, we define a function $H\left(  z\right)  $
such that $H\left(  z\right)  >0\ $for all$~z\in\mathbb{R}$ \ and
\begin{equation}
\mathbb{\ }sgnH^{\prime}\left(  z\right)  =sgn\left(  z\right)  .
\end{equation}
We consider the stochastic equation, with $f$ deterministic%
\begin{equation}
d\log P=fdt+\sigma\left\{  H\left(  \frac{f}{f+2}\right)  \right\}  ^{1/2}dW
\end{equation}
so that $\mathbb{V}\left(  t\right)  =\sigma^{2}H\left(  \frac{f\left(
t\right)  }{f\left(  t\right)  +2}\right)  $ with $\sigma=const.$

While in principle, $f\left(  t\right)  :=D\left(  t\right)  /S\left(
t\right)  -1\in\left(  -1,\infty\right)  $, except under conditions that are
very far from equilibrium, one can assume $f\left(  t\right)  \in\left(
-a/2,a/2\right)  $ for some small $a,$ at least $a\in(0,1]$.

We compute
\begin{align}
\sigma^{-2}\frac{d}{dt}\mathbb{V}  &  =\frac{d}{dt}H\left(  \frac{f}%
{f+2}\right) \nonumber\\
&  =H^{\prime}\left(  \frac{f}{f+2}\right)  \frac{2}{\left(  f+2\right)  ^{2}%
}\frac{df}{dt}.
\end{align}
Based on this calculation, one concludes
if $f$ has a maximum, recalling that $f:=D/S-1$ is
positive near the maximum, then $d\mathbb{V}/dt$ has the same sign as $df/dt.$
So a maximum in $\mathbb{V}$ corresponds to a maximum in $f$, 
while $log\left(P\right)$ has its maximum at $t_{b}>t_{m}$.
\end{example}

\section{Supply and Demand as a function of valuation}

We consider the basic model (\ref{price}) now with the excess demand,
i.e., $D/S-1,$ depending on the valuation, $P_{a}\left(  t\right)  ,$ which
can be regarded either as a stochastic or deterministic function. It is now
commonly accepted in economics and finance that the trading price will often
stray from the fundamental valuation \cite{SSW, S}. We write the price
equation for the time evolution as
\begin{equation}
\frac{d}{dt}\log P\left(  t\right)  =\frac{D}{S}-1=\log\frac{P_{a}\left(
t\right)  }{P\left(  t\right)  }. \label{value}%
\end{equation}
The right hand side of equation (\ref{value}) is a linearization (as
discussed in Section 1.3) and the right hand side of has the
same linearization as $\left(  P_{a}-P\right)  /P$. The equation simply
expresses the idea that undervaluation is a motivation to buy, while
overvaluation is a motivation to sell, as one assumes in classical finance.
The non-classical feature is the absence of infinite arbitrage. Analogous to
Section 1.3, we write the stochastic version of (\ref{value}) as
\begin{equation}
d\log P\left(  t,\omega\right)  =\log\frac{P_{a}\left(  t,\omega\right)
}{P\left(  t,\omega\right)  }dt+\sigma\left(  t,\omega\right)  \left(
1+\log\frac{P_{a}\left(  t,\omega\right)  }{P\left(  t,\omega\right)
}\right)  dW\left(  t,\omega\right)  . \label{stochvalue}%
\end{equation}
At this point we allow both $P_{a}$ and $\sigma$ to be stochastic, with
$EP_{a}^{2}<C$ and $E\sigma^{2}<C$ but will specialize to given and
deterministic $P_{a}$ and $\sigma$ after the first result. We also assume
$1+\log\left(  P_{a}/P\right)  >0,$ i.e., $P_{a}/P>e^{-1},$ i.e., the
fundamental value, $P_{a},$ and trading price, $P,$ do not differ drastically.

\begin{notation}
Let $X:=\log P,$ $X_{a}:=\log P_{a},$ $y:=E\log P,\ y_{a}:=E\log P_{a},$
$z:=E\left(  \log P\right)  ^{2}.$ When $\log P_{a}$ and $\log P$ are
deterministic, we write lower case $x_{a}$ and $x,$ respectively.
\end{notation}

The equation $\left(  \ref{stochvalue}\right)  $ is short for the integral
form (using the notation above) for any $t_{2}>t_{1}>t_{0},$%
\begin{equation}
X\left(  t_{2}\right)  -X\left(  t_{1}\right)  =\int_{t_{1}}^{t_{2}}%
X_{a}-Xds+\int_{t_{1}}^{t_{2}}\sigma\left(  s,\omega\right)  \left(
1+X_{a}-X\right)  dW\left(  s\right)  . \label{logP}%
\end{equation}
Noting that $E\int f\left(  t\right)  dW=0,$ we find the expectation of
(\ref{logP}) as%
\begin{equation}
y\left(  t_{2}\right)  -y\left(  t_{1}\right)  =\int_{t_{1}}^{t_{2}}%
y_{a}\left(  s\right)  ds-\int_{t_{1}}^{t_{2}}y\left(  s\right)  ds
\label{expectationLogP}%
\end{equation}
i.e., one has the ODE, with $y_{0}:=y\left(  t_{0}\right)  :=E\left[  \log
P\left(  t_{0}\right)  \right]  ,$%
\begin{equation}
\frac{d}{dt}y\left(  t\right)  =y_{a}\left(  t\right)  -y\left(  t\right)
,\ \ \ y\left(  t_{0}\right)  :=y_{0} \label{Eqy}%
\end{equation}
This has the unique solution, for known $y_{a}\left(  t\right)  ,$
\begin{equation}
y\left(  t\right)  =e^{t_{0}-t}y\left(  t_{0}\right)  +e^{-t}\int_{t_{0}}%
^{t}y_{a}\left(  s\right)  e^{s}ds. \label{Solny}%
\end{equation}

Note that if we eliminate randomness altogether, i.e., $\sigma:=0$ and
deterministic $P_{a}\left(  t\right)  $,
\begin{equation}
\frac{d}{dt}\log P\left(  t\right)  =\log\frac{P_{a}\left(  t\right)
}{P\left(  t\right)  },
\end{equation}
with solution%
\begin{equation}
x\left(  t\right)  =e^{t_{0}-t}x\left(  t_{0}\right)  +e^{-t}\int_{t_{0}}%
^{t}e^{s}x_{a}\left(  s\right)  ds.
\end{equation}
where $x\left(  t\right)  :=\log P\left(  t\right)  $ and $x_{a}\left(
t\right)  :=\log P_{a}\left(  t\right)  $. We note that the solution of
$y\left(  t\right)  =E\log P\left(  t\right)  $ of $\log P$ in terms of
$y_{a}\left(  t\right)  =E\log P_{a}\left(  t\right)  $ is the same as $\log
P\left(  t\right)  $ in terms of $\log P_{a}\left(  t\right)  $, i.e. both
expected value and deterministic $\log P$ satisfy the same equation.

\subsection{The stochastic problem}

We write the SDE $\left(  \ref{stochvalue}\right)  $ as%
\begin{equation}
dX=\left(  X_{a}-X\right)  dt+\sigma\left(  1+X_{a}-X\right)  dW. \label{sde}%
\end{equation}
We say $X$ is a solution to a SDE if $X\in H_{2}\left[  0,T\right]  $ and
solves the integral version of the SDE for almost every $\omega\in\Omega$. The
solution to the stochastic equation (\ref{stochvalue}), $X\left(
t,\omega\right)  $ is unique, belongs to $H_{2}\left[  0,T\right]  $ and is
continuous in $t\in\left[  0,T\right]  $ for almost every $\omega\in\Omega$
(\cite{SH} p. 94). We denote the remaining set $\Gamma,$ so that $X\left(
t,\omega\right)  $ is continuous in $t$ for all $\omega\in\Omega
\ \backslash\ \Gamma.$ One has by basic measure theory (e.g., \cite{R}), that
for any measurable function such as $X$ or $X^{2}$ one has%
\begin{align}
E\int_{t}^{t+\Delta t}X\left(  s,\omega\right)  ds  &  =\int_{\Omega}\int
_{t}^{t+\Delta t}X\left(  s,\omega\right)  dsdP\left(  \omega\right) \nonumber\\
&  =\int_{\Omega\ \backslash\ \Gamma}\int_{t}^{t+\Delta t}X\left(  s,\omega\right)  dsdP\left(
\omega\right)  +\int_{\Gamma}\int_{t}^{t+\Delta t}X\left(  s,\omega\right)  dsdP\left(  \omega\right)
.
\end{align}
Thus from here on we can ignore the set $\Gamma$ and assume that $X\left(
t,\omega\right)  $ is continuous when the expectation value is computed .

Next, using (\ref{expectationLogP}) we compute the variance, of $\Delta
X:=X\left(  t+\Delta t,\omega\right)  -X\left(  t,\omega\right)  $ and later
we will determine the terms that vanish upon dividing by $\Delta t,$%
\begin{align}
V\left(  t,t+\Delta t\right)   &  :=E\left[  X\left(  t+\Delta t\right)
-EX\left(  t\right)  \right]  ^{2}-\left(  E\left[  X\left(  t+\Delta
t\right)  -X\left(  t\right)  \right]  \right)  ^{2}\nonumber\\
&  =E\left[  \int_{t}^{t+\Delta t}X_{a}-Xds+\int_{t}^{t+\Delta t}\sigma\left(
1+X_{a}-X\right)  dW\left(  s\right)  \right]  ^{2}\label{var}\nonumber\\
&  -\left(  E\left[  \int_{t}^{t+\Delta t}X_{a}-Xds+\int_{t}^{t+\Delta
t}\sigma\left(  1+X_{a}-X\right)  dW\left(  s\right)  \right]  \right)
^{2}\ .
\end{align}
Note that with $\Delta X:=X\left(  t+\Delta t\right)  -X\left(  t\right)  $ we
have
\begin{align}
V\left(  t,t+\Delta t\right)   &  =Var\left[  X\left(  t+\Delta t\right)
-X\left(  t\right)  \right]  =Var\left[  \log\frac{P\left(  t+\Delta t\right)
}{P\left(  t\right)  }\right] \nonumber\\
&  =Var\left[  \log\left(  \frac{\Delta P}{P}+1\right)  \right]  \simeq
Var\left[  \frac{\Delta P}{P}\right]  \ . \label{varapp}%
\end{align}
so that $V\left(  t,t+\Delta t\right)  $ is essentially a measure of the
variance of relative price change. Since $E\int_{t}^{t+\Delta t}\sigma\left(
1+X_{a}-X\right)  dW\left(  s\right)  =0$ one has%
\begin{align}
V\left(  t,t+\Delta t\right)   &  =E\left[  \int_{t}^{t+\Delta t}%
X_{a}-Xds+\int_{t}^{t+\Delta t}\sigma\left(  1+X_{a}-X\right)  dW\left(
s\right)  \right]  ^{2}\nonumber\\
&  -\left(  E\int_{t}^{t+\Delta t}X_{a}-Xds\right)  ^{2}\nonumber\\
&  =V_{1}\left(  t,t+\Delta t\right)  +V_{2}\left(  t,t+\Delta t\right)
+V_{3}\left(  t,t+\Delta t\right)
\end{align}
where%
\begin{align}
V_{1}\left(  t,t+\Delta t\right)   &  :=E\left(  \int_{t}^{t+\Delta t}%
X_{a}-Xds\right)  ^{2}-\left(  E\int_{t}^{t+\Delta t}X_{a}-Xds\right)
^{2},\nonumber\\
V_{2}\left(  t,t+\Delta t\right)   &  :=2E\left[  \int_{t}^{t+\Delta t}%
X_{a}-Xds\int_{t}^{t+\Delta t}\sigma\left(  1+X_{a}-X\right)  dW\left(
s\right)  \right] \nonumber\\
V_{3}\left(  t,t+\Delta t\right)   &  :=E\left(  \int_{t}^{t+\Delta t}%
\sigma\left(  1+X_{a}-X\right)  dW\left(  s\right)  \right)  ^{2}\nonumber\\
&  =\int_{t}^{t+\Delta t}E\left[  \sigma\left(  1+X_{a}-X\right)  \right]
^{2}ds \label{v3}%
\end{align}

\begin{lemma}
\label{Vest}Let $X$ be a solution to the SDE $\left(  \ref{sde}\right)  $ with
$\sigma\left(  t,\omega\right)  $ and $X_{a}\left(  t,\omega\right)  $
continuous for all $t\in\left[  0,T\right]  $ and all $\omega\in\Omega,\ $with
bounded second moments. Then
\end{lemma}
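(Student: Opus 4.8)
The plan is to argue directly from the decomposition $V(t,t+\Delta t)=V_1+V_2+V_3$ set up just before the statement: $V_3$ is already in the desired closed form $\int_t^{t+\Delta t}E[\sigma(s)(1+X_a(s)-X(s))]^2\,ds$ (equation (\ref{v3})), so it suffices to show that $V_1$ and $V_2$ are of lower order. Concretely I expect the assertion to be the estimate
\[
\left| V(t,t+\Delta t) - \int_t^{t+\Delta t} E\big[\sigma(s)\big(1+X_a(s)-X(s)\big)\big]^2\,ds \right| \le C(\Delta t)^{3/2},
\]
in the same spirit as the deterministic-$\sigma$ estimate (\ref{1a3}); dividing by $\Delta t$ and letting $\Delta t\to0$ then gives $\mathbb{V}(t)=E[\sigma(t)(1+X_a(t)-X(t))]^2$, using continuity in $t$ of $\sigma$ and $X_a$ together with the uniform second-moment bound recorded next.

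The preliminary step is to record the uniform bounds $\sup_{s\in[0,T]}E|X_a(s)-X(s)|^2\le C$ and $\sup_{s\in[0,T]}E[\sigma^2(s)(1+X_a(s)-X(s))^2]\le C$. The bounds involving $\sigma$ and $X_a$ are the hypotheses; the bound on $X$ follows since $X$ is the $H_2[0,T]$ solution of the linear SDE (\ref{sde}), for which the standard Gronwall-type a priori estimate (\cite{SH}) gives a second moment bounded uniformly in $t$. In particular $\sigma(1+X_a-X)\in H_2[0,T]$, which is exactly what makes the It\^o isometry legitimate for this integrand.

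The three pieces are then handled in one line each. For $V_3$: by the It\^o isometry, $V_3=E\big(\int_t^{t+\Delta t}\sigma(1+X_a-X)\,dW\big)^2=\int_t^{t+\Delta t}E[\sigma(1+X_a-X)]^2\,ds$, the main term, which needs no estimate. For $V_1=Var\big(\int_t^{t+\Delta t}(X_a-X)\,ds\big)$: the Cauchy--Schwarz inequality in the time variable gives $0\le V_1\le E\big(\int_t^{t+\Delta t}(X_a-X)\,ds\big)^2\le\Delta t\int_t^{t+\Delta t}E(X_a-X)^2\,ds\le C(\Delta t)^2$. For $V_2$: the Cauchy--Schwarz inequality in $L^2(\Omega)$ applied to the product of the two integrals, followed by the $V_1$-bound on the first factor and the It\^o isometry on the second, yields $|V_2|\le 2\{C(\Delta t)^2\}^{1/2}\{C\Delta t\}^{1/2}=C(\Delta t)^{3/2}$. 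Adding up, $|V-V_3|\le|V_1|+|V_2|\le C(\Delta t)^2+C(\Delta t)^{3/2}\le C(\Delta t)^{3/2}$ for $\Delta t\le1$, which is the claim.

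I do not expect any single estimate to be the real obstacle — each is one application of Cauchy--Schwarz or the It\^o isometry. The one point requiring genuine care is the justification of those stochastic-calculus identities when $\sigma$ and $X_a$ are allowed to be stochastic: one must be sure that $\sigma(1+X_a-X)$ lies in $H_2[0,T]$ and that the cross-term expectation defining $V_2$ is finite, and this is precisely where the bounded-second-moment hypotheses and the $H_2[0,T]$ well-posedness of (\ref{sde}) are used. Once the paper specializes to deterministic $\sigma$ and $P_a$ in the sequel, even this issue disappears.
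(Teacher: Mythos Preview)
Your proposal is correct and matches the paper's approach: the lemma's actual conclusions (which sit just outside the \texttt{lemma} environment in the source) are precisely the two estimates $|V_1|\le C(\Delta t)^2$ and $|V_2|\le C(\Delta t)^{3/2}$ that you prove, and the paper obtains them the same way---Cauchy--Schwarz in $L^2(\Omega)$ for the cross term $V_2$, followed by the It\^o isometry on the stochastic integral factor. The only cosmetic difference is that for the $ds$-integral bound the paper invokes pathwise continuity of $X$ (off the null set $\Gamma$) rather than your Cauchy--Schwarz-in-time argument; your route is if anything cleaner, since it uses only the $H_2[0,T]$ bound and avoids the measure-zero bookkeeping.
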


$\left(  i\right)  $ $\left\vert V_{1}\left(  t,t+\Delta t\right)  \right\vert
\leq C\left(  \Delta t\right)  ^{2}$ so $\lim_{\Delta t\rightarrow0}%
V_{1}\left(  t,t+\Delta t\right)  /\Delta t=0,$ and,

$\left(  ii\right)  $ $\left\vert V_{2}\left(  t,t+\Delta t\right)
\right\vert \leq C\left(  \Delta t\right)  ^{3/2}$ so $\lim_{\Delta
t\rightarrow0}V_{1}\left(  t,t+\Delta t\right)  /\Delta t=0.$

\begin{proof}
$\left(  i\right)  \left(  a\right)  $ We consider the first term in $V_{1},$
namely,%
\begin{align}
E\left(  \int_{t}^{t+\Delta t}X_{a}-Xds\right)  ^{2}  &  =\int_{\Omega}\left(
\int_{t}^{t+\Delta t}X_{a}-Xds\right)  ^{2}dP\left(  \omega\right) \nonumber\\
&  =\int_{\Omega\ \backslash\ \Gamma}\left(  \int_{t}^{t+\Delta t}%
X_{a}-Xds\right)  ^{2}dP\left(  \omega\right)
\end{align}
where we have omitted the set of measure zero, $\Gamma,$ outside of which $X$
is continuous in $t$ on a closed bounded interval. Hence, one can bound the
integrand by $C\left(  \Delta t\right)  ^{2}.$ Thus we have%
\begin{equation}
E\left(  \int_{t}^{t+\Delta t}X_{a}-Xds\right)  ^{2}\leq C\left(  \Delta
t\right)  ^{2}.
\end{equation}

$\left(  i\right)  \left(  b\right)  $ Similarly the second term can be
bounded as%
\begin{align}
\left(  E\int_{t}^{t+\Delta t}X_{a}-Xds\right)  ^{2}  &  =\left(  \int
_{\Omega\ \backslash\ \Gamma}\left(  \int_{t}^{t+\Delta t}X_{a}-Xds\right)
dP\left(  \omega\right)  \right)  ^{2}\nonumber\\
&  \leq C\left(  \Delta t\right)  ^{2}.
\end{align}
Hence, part $\left(  i\right)  $ of the lemma has been proven.

$\left(  ii\right)  $ Using the Schwarz inequality on the second term we have%
\begin{align}
\frac{1}{2}V_{2}\left(  t,t+\Delta t\right)   &  =E\left\{  \left(  \int
_{t}^{t+\Delta t}X_{a}-Xds\right)  \left(  \int_{t}^{t+\Delta t}\sigma\left(
1+X_{a}-X\right)  dW\left(  s\right)  \right)  \right\} \nonumber\\
&  \leq\left\{  E\left(  \int_{t}^{t+\Delta t}X_{a}-Xds\right)  ^{2}\right\}
^{1/2}\left\{  E\left(  \int_{t}^{t+\Delta t}\sigma\left(  1+X_{a}-X\right)
dW\left(  s\right)  \right)  ^{2}\right\}  ^{1/2}.
\end{align}
Using continuity properties, we have the following bound on the first term,
\begin{equation}
\left\{  E\left(  \int_{t}^{t+\Delta t}X_{a}-Xds\right)  ^{2}\right\}
^{1/2}\leq C\left(  \Delta t\right)  .
\end{equation}
For the second we use the basic property used above,%
\begin{align}
\left\{  E\left(  \int_{t}^{t+\Delta t}\sigma\left(  1+X_{a}-X\right)
dW\left(  s\right)  \right)  ^{2}\right\}  ^{1/2}  &  =\left\{  \int
_{t}^{t+\Delta t}E\left[  \sigma\left(  1+X_{a}-X\right)  \right]
^{2}ds\right\}  ^{1/2}\nonumber\\
&  =\left\{  \int_{\Omega\ \backslash\ \Gamma}\int_{t}^{t+\Delta t}E\left[
\sigma\left(  1+X_{a}-X\right)  \right]  ^{2}ds\right\}  ^{1/2}\nonumber\\
&  \leq C\left(  \Delta t\right)  ^{1/2}.
\end{align}

Hence, the proof of the second part of the lemma follows from the following
bound:%
\begin{align}
V_{2}\left(  t,t+\Delta t\right)   &  \leq\left\{  E\left(  \int_{t}^{t+\Delta
t}X_{a}-Xds\right)  ^{2}\right\}  ^{1/2}\left\{  E\left(  \int_{t}^{t+\Delta
t}\sigma\left(  1+X_{a}-X\right)  dW\left(  s\right)  \right)  ^{2}\right\}
^{1/2}\nonumber\\
&  \leq C\left(  \Delta t\right)  ^{3/2}%
\end{align}
This proves the second part of the Lemma.
\end{proof}

\bigskip

Thus, Lemma \ref{Vest} indicates that in analyzing $V\left(  t,t+\Delta
t\right)  /\Delta t$ in the limit of $\Delta t\rightarrow0$ amounts to
analyzing $V_{3}\left(  t,t+\Delta t\right)  /\Delta t.$

\bigskip

At this point we assume that both $P_{a}$ and $\sigma$ are deterministic but
need not be constant in time, and we now use lower case, $x_{a}:=$ $\log
P_{a}$ .

\begin{lemma}
Let $\sigma$ and $P_{a}$ be deterministic, and $X\left(  t\right)  $ as
solution to the SDE $\left(  \ref{sde}\right)  $. Then%
\begin{equation}
V_{3}\left(  t,t+\Delta t\right)  =\int_{t}^{t+\Delta t}\sigma^{2}\left[
1+x_{a}\left(  s\right)  -EX\left(  s\right)  \right]  ^{2}ds+\int
_{t}^{t+\Delta t}\sigma^{2}Var\left[  X\left(  s\right)  \right]  ds.
\end{equation}

\end{lemma}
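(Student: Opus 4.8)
The plan is to start from the It\^{o} isometry identity already recorded in (\ref{v3}), namely
\begin{equation}
V_{3}\left(t,t+\Delta t\right)=\int_{t}^{t+\Delta t}E\left[\sigma\left(1+X_{a}-X\right)\right]^{2}ds,
\end{equation}
and then expand the integrand now that $\sigma$ and $X_{a}=x_{a}$ are deterministic. First I would pull the deterministic factor $\sigma\left(s\right)$ out of the expectation, writing $E\left[\sigma\left(s\right)\left(1+x_{a}\left(s\right)-X\left(s\right)\right)\right]^{2}=\sigma^{2}\left(s\right)E\left[\left(1+x_{a}\left(s\right)-X\left(s\right)\right)^{2}\right]$.

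Next I would split $1+x_{a}\left(s\right)-X\left(s\right)=\left(1+x_{a}\left(s\right)-EX\left(s\right)\right)-\left(X\left(s\right)-EX\left(s\right)\right)$, in which the first summand is deterministic and the second has mean zero. Expanding the square produces three terms: the deterministic $\left(1+x_{a}-EX\right)^{2}$, the cross term $-2\left(1+x_{a}-EX\right)\left(X-EX\right)$, and $\left(X-EX\right)^{2}$. Taking expectations annihilates the cross term because $E\left[X\left(s\right)-EX\left(s\right)\right]=0$, leaving $E\left[\left(1+x_{a}\left(s\right)-X\left(s\right)\right)^{2}\right]=\left(1+x_{a}\left(s\right)-EX\left(s\right)\right)^{2}+Var\left[X\left(s\right)\right]$. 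Substituting this back into the integral and splitting it into its two pieces yields the stated formula.

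The only points that need a word of care are integrability and the applicability of the isometry, both of which are already in place: the integrand $E\left[\left(1+x_{a}-X\right)^{2}\right]$ is finite and integrable on $\left[t,t+\Delta t\right]$ since $X\in H_{2}\left[0,T\right]$ and $x_{a},\sigma$ are continuous, and (\ref{v3}) is used exactly as in the preceding lemmas, ignoring the null set $\Gamma$ outside of which $X\left(\cdot,\omega\right)$ is continuous, as justified above. I do not expect any real obstacle here; the lemma is essentially a bookkeeping step that isolates the ``mean part'' $\left(1+x_{a}-EX\right)^{2}$ from the ``variance part'' $Var\left[X\right]$ inside $V_{3}$, which is the decomposition the subsequent analysis of $\mathbb{V}\left(t\right)$ will exploit.
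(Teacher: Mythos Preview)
Your proof is correct and follows essentially the same route as the paper: starting from the It\^{o} isometry expression in (\ref{v3}), one simply rewrites $E\left[\left(1+x_{a}-X\right)^{2}\right]$ as $\left(1+x_{a}-EX\right)^{2}+Var\left[X\right]$. The paper phrases this as ``adding and subtracting $\left(EX\right)^{2}$ in the integrand,'' while you do the equivalent mean-plus-fluctuation split; the content is identical.
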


\bigskip

\begin{proof}
Using the expression $\left(  \ref{v3}\right)  $ above, the identity follows
upon adding and subtracting $EX^{2}\left(  s\right)  $ in the integrand.
\end{proof}

\bigskip

\begin{lemma}
Let $\sigma$ and $P_{a}$ be deterministic and continuous. Then%
\[
\mathbb{V}\left(  t\right)  :=\lim_{\Delta t\rightarrow0}\frac{V\left(
t,t+\Delta t\right)  }{\Delta t}=\lim_{\Delta t\rightarrow0}\frac{V_{3}\left(
t,t+\Delta t\right)  }{\Delta t}%
\]%
\begin{align}
&  =\lim_{\Delta t\rightarrow0}\frac{1}{\Delta t}\left\{  \int_{t}^{t+\Delta
t}\sigma^{2}\left[  1+x_{a}-y\right]  ^{2}+Var\left[  X\right]  ds\right\}
\nonumber\\
&  =\sigma^{2}\left[  1+x_{a}-y\right]  ^{2}+Var\left[  X\right]  .
\label{boldV}%
\end{align}

\end{lemma}

\bigskip

Next, we will compute $Var\left[  X\right]  $ starting with $E\left[
X^{2}\right]  $ and assuming that $P_{a}$ and $\sigma$ are deterministic.

\begin{lemma}
\label{LemdZ}Let $\sigma$ and $x_{a}$ be deterministic and continuous. Then
$z\left(  t\right)  :=EX^{2}\left(  t\right)  $ satisfies the ODE%
\begin{align}
\frac{dz}{dt}  &  =\left(  \sigma^{2}-2\right)  z+\left(  2-2\sigma
^{2}\right)  x_{a}y-2\sigma^{2}y+\sigma^{2}\left(  1+x_{a}\right)
^{2}\nonumber\\
z\left(  t_{0}\right)   &  =y\left(  t_{0}\right)  ^{2}=:y_{0}^{2}\ .
\label{ODEz}%
\end{align}

\end{lemma}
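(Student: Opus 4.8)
The plan is to derive the ODE for $z(t):=EX^2(t)$ directly from the SDE $\left(\ref{sde}\right)$ by applying It\^o's formula to the function $X\mapsto X^2$, then taking expectations. First I would write, via It\^o, $d(X^2) = 2X\,dX + (dX)^2$, where $dX = (X_a - X)\,dt + \sigma(1+X_a-X)\,dW$. The quadratic variation term is $(dX)^2 = \sigma^2(1+X_a-X)^2\,dt$, since only the $dW$ part contributes. Hence $d(X^2) = \bigl[2X(X_a-X) + \sigma^2(1+X_a-X)^2\bigr]dt + 2X\sigma(1+X_a-X)\,dW$. Taking expectations kills the $dW$ term (the integrand is in $H_2$ by the bounded-second-moment hypothesis, so the It\^o integral is a martingale with zero mean), leaving $\frac{dz}{dt} = 2E[X X_a] - 2E[X^2] + \sigma^2 E[(1+X_a-X)^2]$.

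Next I would simplify using the fact that $x_a$ is deterministic, so $E[X x_a] = x_a y$, and expand the last square: $E[(1+x_a-X)^2] = (1+x_a)^2 - 2(1+x_a)E[X] + E[X^2] = (1+x_a)^2 - 2(1+x_a)y + z$. Substituting, $\frac{dz}{dt} = 2x_a y - 2z + \sigma^2\bigl[(1+x_a)^2 - 2(1+x_a)y + z\bigr]$. Collecting the coefficient of $z$ gives $(\sigma^2 - 2)z$; collecting the terms linear in $y$ gives $2x_a y - 2\sigma^2(1+x_a)y = 2x_a y - 2\sigma^2 y - 2\sigma^2 x_a y = (2 - 2\sigma^2)x_a y - 2\sigma^2 y$; and the remaining term is $\sigma^2(1+x_a)^2$. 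This is exactly the claimed right-hand side of $\left(\ref{ODEz}\right)$. For the initial condition, at $t=t_0$ the value $\log P(t_0)$ is taken as a given fixed (deterministic) number, so $X(t_0)$ is a constant, whence $z(t_0) = E[X(t_0)^2] = (E[X(t_0)])^2 = y(t_0)^2 =: y_0^2$.

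The one genuinely non-routine point is the justification that the stochastic integral $\int_{t_0}^t 2X(s)\sigma(s)(1+X_a(s)-X(s))\,dW(s)$ has zero expectation, i.e. that its integrand lies in $H_2[0,T]$. This follows because $X$ is the $H_2$ solution of the SDE (cited from \cite{SH} p.~94) with bounded second moment, $\sigma$ and $x_a$ are continuous (hence bounded) deterministic functions on $[0,T]$, and a product of an $L^2$-bounded process with bounded deterministic functions is again $L^2$-bounded on $[0,T]\times\Omega$; one may also invoke the earlier observation that the exceptional set $\Gamma$ where $X$ fails to be continuous has measure zero and can be ignored inside expectations. With that in hand, the passage from the It\^o differential to the ODE is immediate, and uniqueness/existence of the solution $z$ on $[t_0,T]$ is automatic since $\left(\ref{ODEz}\right)$ is a linear first-order ODE with continuous coefficients (the inhomogeneous terms $x_a y$, $y$, $(1+x_a)^2$ being continuous once $\left(\ref{Solny}\right)$ is invoked for $y$). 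So the main obstacle is simply the martingale/$H_2$ bookkeeping; the algebra of matching coefficients is mechanical.
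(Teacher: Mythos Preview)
Your proposal is correct and follows essentially the same route as the paper: apply It\^o's formula to $X^2$, take expectations to remove the $dW$ term, and read off the linear ODE for $z=EX^2$. The only cosmetic difference is that the paper writes the general It\^o formula and then specializes to $f(x)=x^2$ with the drift already collected into the form $(\sigma^2-2)X^2+(2-2\sigma^2)x_aX-2\sigma^2X+\sigma^2(1+x_a)^2$, whereas you first land on $2X(X_a-X)+\sigma^2(1+X_a-X)^2$ and then expand; your added remarks on the $H_2$ justification for the vanishing of the stochastic integral and on the deterministic initial datum are more explicit than the paper's, but the argument is the same.
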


\begin{proof}
The stochastic process for $X\left(  t\right)  $, i.e., $\left(
\ref{sde}\right)  $ can be written
\[
A\left(  t,\omega\right)  :=\left(  x_{a}-X\right)  ,\ \ B\left(
t,\omega\right)  :=\sigma\left(  1+x_{a}-X\right)
\]%
\begin{equation}
dX\left(  t,\omega\right)  =A\left(  t,\omega\right)  dt+B\left(
t,\omega\right)  dW\left(  t\right)
\end{equation}
Ito's formula provides the differential for a smooth function of $X$ as%
\begin{align}
df\left(  X\left(  t\right)  ,t\right)   &  =\left[  \frac{\partial f(X\left(
t\right)  ,t)}{\partial t}+A\left(  t\right)  \frac{\partial f(X\left(
t\right)  ,t)}{\partial x}+\frac{B^{2}\left(  t\right)  }{2}\frac{\partial
^{2}f\left(  X\left(  t\right)  ,t\right)  }{\partial x^{2}}\right]
dt\nonumber\\
&  +B\left(  t\right)  \frac{\partial f\left(  X\left(  t\right)  ,t\right)
}{\partial x}dW\left(  t\right)  .
\end{align}
For $f\left(  x\right)  :=x^{2}$ we have then from Ito's formula,%
\begin{align}
dX^{2}  &  =\left[  \left(  \sigma^{2}-2\right)  X^{2}+\left(  2-2\sigma
^{2}\right)  x_{a}X-2\sigma^{2}X+\sigma^{2}\left(  1+x_{a}\right)
^{2}\right]  dt\nonumber\\
&  +\sigma\left(  1+x_{a}-X\right)  \left(  2X\right)  dW
\end{align}
Hence, we can write in the usual way, as $EdW$ vanishes:%
\begin{equation}
E\left[  X^{2}\left(  t\right)  -X^{2}\left(  t_{0}\right)  \right]
=\int_{t_{0}}^{t}\left(  \sigma^{2}-2\right)  EX^{2}+\left(  2-2\sigma
^{2}\right)  x_{a}EX-2\sigma^{2}EX+\sigma^{2}\left(  1+x_{a}\right)  ^{2}ds
\end{equation}
Using the notation $y\left(  t\right)  :=E\left(  \log P\right)  $ and
$z\left(  t\right)  :=E\left(  \log P\right)  ^{2}$ we have
\begin{equation}
z\left(  t\right)  -z\left(  t_{0}\right)  =\int_{t_{0}}^{t}\left(  \sigma
^{2}-2\right)  z\left(  t\right)  +\left(  2-2\sigma^{2}\right)  x_{a}y\left(
t\right)  -2\sigma^{2}y\left(  t\right)  +\sigma^{2}\left(  1+x_{a}\right)
^{2}ds.
\end{equation}
Differentiation with respect to $t$ yields the result and proves the lemma.

\end{proof}

In the sequel, we assume for simplicity that $\sigma$ is constant in time, and
$x_{a}\left(  t\right)  $ is deterministic and smooth. We can solve for $z$
directly but it will be more illuminating if we write the solution in the
following form.

\begin{lemma}
\textbf{\label{LemmaZ0}}Let $x_{a}$ be a continuous function. The unique
solution to
\begin{align}
\frac{dz_{0}}{dt}  &  =-2z_{0}+2x_{a}y\label{z0}\\
z_{0}\left(  t_{0}\right)   &  :=y\left(  t_{0}\right)  ^{2} \label{z0IC}%
\end{align}
is given by $z_{0}\left(  t\right)  =y\left(  t\right)  ^{2}.$
\end{lemma}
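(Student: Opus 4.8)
The plan is to verify by direct substitution that $y(t)^2$ satisfies both \eqref{z0} and \eqref{z0IC}, and then to invoke the uniqueness theorem for linear first-order ODEs to conclude that it is \emph{the} solution.

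First I would recall that, under the standing assumption of this subsection that $P_a$ (and hence $x_a=\log P_a$) is deterministic, equation \eqref{Eqy} for $y=E\log P$ reads $\tfrac{d}{dt}y(t)=x_a(t)-y(t)$ with $y(t_0)=y_0$, and that the explicit formula \eqref{Solny} shows that $y$ is $C^1$ on the interval in question. Then I would simply differentiate using the chain rule:
\[
\frac{d}{dt}\bigl(y(t)^2\bigr)=2y(t)\,\frac{d}{dt}y(t)=2y(t)\bigl(x_a(t)-y(t)\bigr)=-2\,y(t)^2+2x_a(t)\,y(t).
\]
This is precisely \eqref{z0} with $z_0$ replaced by $y^2$, and the initial condition matches as well, since $y(t_0)^2=y_0^2$, which is \eqref{z0IC}.

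For uniqueness, I would observe that \eqref{z0} is a linear, inhomogeneous, first-order ODE for $z_0$, namely $z_0'=-2z_0+g(t)$ with forcing term $g(t):=2x_a(t)\,y(t)$, which is continuous on the interval (a product of continuous functions). The integrating-factor formula then yields exactly one solution meeting the prescribed value at $t_0$, namely $z_0(t)=e^{-2(t-t_0)}y_0^2+\int_{t_0}^t e^{-2(t-s)}g(s)\,ds$; since $y^2$ is such a solution, the two must coincide.

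There is no real obstacle here: the statement is a bookkeeping identity whose purpose is to let one later rewrite $Var[X]=z-z_0$ cleanly, building on Lemma~\ref{LemdZ}. The only place that calls for a moment's attention is invoking the correct form of the $y$-equation — i.e.\ using that $y_a=x_a$ once $P_a$ is deterministic — so that the cross term $2x_a y$ in \eqref{z0} is genuinely $2y\,y'+2y^2$, which is exactly what makes the substitution close.
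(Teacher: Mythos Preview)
Your proposal is correct and essentially identical to the paper's own proof: both verify by direct substitution, using $y'=x_a-y$, that $(y^2)'=2yy'=2y(x_a-y)=-2y^2+2x_ay$, check the initial condition, and then invoke the standard uniqueness theorem for linear first-order ODEs with continuous coefficients.
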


\begin{proof}
Note that $x_{a}=y_{a}=EX_{a}$ since $X_{a}$ is deterministic under our
current assumption. We know that $y\left(  t\right)  $ is a solution to the
equation%
\begin{equation}
\frac{d}{dt}y\left(  t\right)  =y_{a}\left(  t\right)  -y\left(  t\right)
,\ \ \ y\left(  t_{0}\right)  :=y_{0}%
\end{equation}
so we can substitute $x_{a}=y^{\prime}+y$ into $\left(  22\right)  $ and
obtain%
\begin{equation}
z_{0}^{\prime}+2z_{0}=2yy^{\prime}+2y^{2}=2y\left(  y^{\prime}+y\right)
=2x_{a}y.
\end{equation}
Hence, $z_{0}\left(  t\right)  :=y\left(  t\right)  ^{2}$ solves $\left(
\ref{z0}\right)  ,\left(  \ref{z0IC}\right)  $ and from basic ODE theory, the
solution is unique so long as $x_{a}$ is continuous.
\end{proof}

\begin{lemma}
\label{LemmaZ1}The unique solution to $\left(  \ref{ODEz}\right)  $ is given
by
\begin{equation}
z\left(  t\right)  :=z_{0}\left(  t\right)  +\sigma^{2}z_{1}\left(  t\right)
=y\left(  t\right)  ^{2}+\sigma^{2}z_{1}\left(  t\right) \label{z0}
\end{equation}
with $z_{1}\left(  t\right)  $ defined by%
\begin{equation}
z_{1}\left(  t\right)  =\int_{t_{0}}^{t}e^{\left(  2-\sigma^{2}\right)
\left(  s-t\right)  }\left[  y\left(  s\right)  -\left(  1+x_{a}\left(
s\right)  \right)  \right]  ^{2}ds. \label{z1}%
\end{equation}

\end{lemma}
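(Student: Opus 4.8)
The plan is to verify directly that the proposed $z(t)=y(t)^{2}+\sigma^{2}z_{1}(t)$ solves the linear ODE $\left(\ref{ODEz}\right)$ with the stated initial condition, and then to invoke uniqueness for first‑order linear ODEs with continuous coefficients. First I would dispose of the trivial case $\sigma=0$: then $\left(\ref{ODEz}\right)$ collapses to $z'=-2z+2x_{a}y$ with $z(t_{0})=y_{0}^{2}$, which is exactly equation $\left(\ref{z0}\right)$–$\left(\ref{z0IC}\right)$, so Lemma \ref{LemmaZ0} gives $z=z_{0}=y^{2}$ and $z_{1}\equiv 0$, matching $\left(\ref{z1}\right)$. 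So I may assume $\sigma\neq 0$ and work with the ansatz $z=y^{2}+\sigma^{2}z_{1}$, i.e. $\sigma^{2}z_{1}:=z-z_{0}$.

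Next I would substitute this ansatz into $\left(\ref{ODEz}\right)$. Since $X_{a}$ is deterministic we have $y_{a}=x_{a}$, so the ODE $\left(\ref{Eqy}\right)$ gives $y'=x_{a}-y$ and hence $\tfrac{d}{dt}(y^{2})=2yy'=2x_{a}y-2y^{2}$; thus the left side of $\left(\ref{ODEz}\right)$ becomes $2x_{a}y-2y^{2}+\sigma^{2}z_{1}'$. Expanding the right side of $\left(\ref{ODEz}\right)$ with $z=y^{2}+\sigma^{2}z_{1}$, the $y^{2}$ and $x_{a}y$ contributions cancel against the left side, every surviving term carries a factor $\sigma^{2}$, and after dividing through one is left with
\begin{equation}
z_{1}'=(\sigma^{2}-2)\,z_{1}+\left[\,y^{2}-2y-2x_{a}y+(1+x_{a})^{2}\,\right].
\end{equation}
The bracketed expression is precisely $\left[\,y-(1+x_{a})\,\right]^{2}$, so $z_{1}$ must satisfy $z_{1}'-(\sigma^{2}-2)z_{1}=\left[y-(1+x_{a})\right]^{2}$; and the initial condition $z(t_{0})=y_{0}^{2}=z_{0}(t_{0})$ forces $z_{1}(t_{0})=0$.

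I would then solve this scalar linear equation with the integrating factor $e^{(2-\sigma^{2})t}$, writing $\tfrac{d}{dt}\bigl(e^{(2-\sigma^{2})t}z_{1}\bigr)=e^{(2-\sigma^{2})t}\left[y-(1+x_{a})\right]^{2}$ and integrating from $t_{0}$ to $t$ using $z_{1}(t_{0})=0$; this yields exactly the formula $\left(\ref{z1}\right)$. Finally, for uniqueness I note that $\left(\ref{ODEz}\right)$ is a first‑order linear ODE whose coefficients are continuous in $t$ — $x_{a}$ by hypothesis, and $y$ because it solves $\left(\ref{Eqy}\right)$ with continuous data — so its solution subject to $z(t_{0})=y_{0}^{2}$ is unique, and the $z$ just constructed is that solution. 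The only genuine work is the algebraic reduction in the middle step: tracking the terms so that the non‑$z_{1}$ part assembles into the perfect square $\left[y-(1+x_{a})\right]^{2}$ and the coefficient of $z_{1}$ emerges as $\sigma^{2}-2$; the integrating‑factor computation and the linear‑ODE uniqueness theorem are then routine.
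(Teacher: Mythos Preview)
Your proposal is correct and follows essentially the same route as the paper: substitute the ansatz $z=y^{2}+\sigma^{2}z_{1}$ into $\left(\ref{ODEz}\right)$, use $z_{0}'=-2z_{0}+2x_{a}y$ (equivalently $(y^{2})'=2x_{a}y-2y^{2}$) to cancel the $\sigma$-free terms, divide by $\sigma^{2}$ to obtain $z_{1}'+(2-\sigma^{2})z_{1}=[y-(1+x_{a})]^{2}$, and solve by integrating factor. Your explicit handling of the $\sigma=0$ case and of the uniqueness argument are slight elaborations but do not change the substance of the proof.
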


\begin{proof}
Let $z_{1}$ be defined by $z\left(  t\right)  =z_{0}\left(  t\right)
+\sigma^{2}z_{1}\left(  t\right)  =y\left(  t\right)  ^{2}+\sigma^{2}%
z_{1}\left(  t\right)  .$ Substituting into $\left(  \ref{ODEz}\right)  $
yields%
\begin{align}
z_{0}^{\prime}+\sigma^{2}z_{1}^{\prime}  &  =\left(  \sigma^{2}-2\right)
\left(  z_{0}+\sigma^{2}z_{1}\right)  +\left(  2-2\sigma^{2}\right)
x_{a}y-2\sigma^{2}y+\sigma^{2}\left(  1+x_{a}\right)  ^{2}\nonumber\\
&  =\sigma^{2}z_{0}-2z_{0}+\left(  \sigma^{2}-2\right)  \sigma^{2}%
z_{1}+\left(  2-2\sigma^{2}\right)  x_{a}y-2\sigma^{2}y+\sigma^{2}\left(
1+x_{a}\right)  ^{2}%
\end{align}
so that the terms $z_{0}^{\prime}$ and $-2z_{0}+2x_{a}y$ vanish from both
sides.. Using $z_{0}=y^{2}$ we have left, upon dividing by $\sigma^{2},$ the
equation for $z_{1}$%
\begin{equation}
z_{1}^{\prime}+\left(  2-\sigma^{2}\right)  z_{1}=\left[  y-\left(
1+x_{a}\right)  \right]  ^{2},%
\end{equation}
and elementary methods yield the solution (\ref{z0} - \ref{z1}).
\end{proof}

Note that although $\sigma\in\mathbb{R}$ \ was used in this proof, comparable
result can be obtained in the general case in which $\sigma$ is a 
continuous and deterministic function.

\bigskip

Thus, Lemmas \ref{LemmaZ0} and \ref{LemmaZ1} yield the following identity for
$Var\left[  X\left(  t\right)  \right]  .$

\begin{theorem}
\label{ThmVar}Let $\sigma\in\mathbb{R}$ and $x_{a}\left(  t\right)  $ be
deterministic and continuous. Let $c:=\left(  2-\sigma^{2}\right)  $ and%
\begin{equation}
\sigma^{2}I\left(  t,t+\Delta t\right)  :=Var\left[  X\left(  t+\Delta
t\right)  \right]  -Var\left[  X\left(  t\right)  \right]  .
\end{equation}
\begin{equation}
w\left(  s\right)  :=\left[  1+x_{a}\left(  s\right)  -y\left(  s\right)
\right]  ^{2}.
\end{equation}
Then one has the following identities:
\begin{align}
Var\left[  X\left(  t\right)  \right]   &  =\sigma^{2}\int_{t_{0}}%
^{t}e^{c\left(  s-t\right)  }\left[  y\left(  s\right)  -\left(
1+x_{a}\left(  s\right)  \right)  \right]  ^{2}ds\\
I\left(  t,t+\Delta t\right)   &  =\int_{t}^{t+\Delta t}e^{c\left(
s-t\right)  }w\left(  s\right)  ds.
\end{align}

\end{theorem}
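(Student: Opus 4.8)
The plan is to read both identities directly off Lemmas~\ref{LemdZ}--\ref{LemmaZ1}; essentially no new work is needed, because the reduction of $V(t,t+\Delta t)$ to its $V_3$ component has already been carried out in Lemma~\ref{Vest} and the Ito step is encoded in the ODE \eqref{ODEz}. The one extra ingredient is the elementary relation $Var[X(t)] = E[X^2(t)] - (E[X(t)])^2 = z(t) - y(t)^2$.

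For the variance formula itself, I would invoke Lemma~\ref{LemmaZ1}, which says the unique solution of \eqref{ODEz} is $z(t) = y(t)^2 + \sigma^2 z_1(t)$ with $z_1$ as in \eqref{z1}. Subtracting $y(t)^2 = (E[X(t)])^2$ leaves $Var[X(t)] = \sigma^2 z_1(t)$. Since squaring is insensitive to sign, $[\,y(s) - (1 + x_a(s))\,]^2 = [\,1 + x_a(s) - y(s)\,]^2 = w(s)$, so with $c = 2 - \sigma^2$ the formula \eqref{z1} reads $z_1(t) = \int_{t_0}^{t} e^{c(s-t)} w(s)\,ds$, which is exactly the asserted expression for $Var[X(t)]$ up to the factor $\sigma^2$.

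For the increment $\sigma^2 I(t,t+\Delta t) = Var[X(t+\Delta t)] - Var[X(t)]$, I would subtract the variance formula at $t$ from the same formula at $t+\Delta t$ and split the outer integral $\int_{t_0}^{t+\Delta t}$ as $\int_{t_0}^{t} + \int_{t}^{t+\Delta t}$, so that the contributions over $[t_0,t]$ combine and what is left is $\sigma^2\int_{t}^{t+\Delta t} e^{c(s-t)} w(s)\,ds$. The step that has to be watched — and the only real obstacle in the whole argument — is the base point inside the exponential weight $e^{c(s-t)}$: this weight is anchored at the lower limit of the increment, so forming $Var[X(t+\Delta t)] - Var[X(t)]$ requires handling how that base point moves, equivalently exploiting the semigroup structure that $z_1$ inherits from the linear ODE $z_1' + c z_1 = w$, before the difference can be put in the compact integral form stated here — the form that is then convenient for comparing the location of the extremum of $\mathbb{V}(t)$ with that of $E\log P$ in the analysis that follows.
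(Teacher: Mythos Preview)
Your derivation of the first identity is exactly the paper's own proof: the paper simply writes $Var[X(t)]=z(t)-y(t)^{2}=\sigma^{2}z_{1}(t)$ and reads off the integral from Lemma~\ref{LemmaZ1}, just as you propose.

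On the second identity, your caution is well placed, but your first sentence about the increment is not correct: when you split $\int_{t_{0}}^{t+\Delta t}$ as $\int_{t_{0}}^{t}+\int_{t}^{t+\Delta t}$, the two contributions over $[t_{0},t]$ do \emph{not} cancel, because the exponential weight in $Var[X(t+\Delta t)]$ is $e^{c(s-t-\Delta t)}$, not $e^{c(s-t)}$. What actually comes out of the subtraction is
\[
I(t,t+\Delta t)=\bigl(e^{-c\Delta t}-1\bigr)\,z_{1}(t)+e^{-c\Delta t}\int_{t}^{t+\Delta t}e^{c(s-t)}w(s)\,ds,
\]
which differs from the formula stated in the theorem by the term $(e^{-c\Delta t}-1)z_{1}(t)$; no amount of ``semigroup structure'' will make that term vanish for finite $\Delta t$ unless $z_{1}(t)=0$. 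The paper's proof only derives the first identity explicitly and asserts that both ``follow immediately,'' so your hesitation here is sharper than the paper's own treatment: the second displayed formula, as written, is not an exact identity. For the purposes of the subsequent analysis (Theorem~\ref{ThmQ}), what is actually used is the first identity together with $\frac{d}{dt}Var[X(t)]=\sigma^{2}\bigl(w(t)-c\,z_{1}(t)\bigr)$, which you can read off directly from $z_{1}'+cz_{1}=w$; you do not need the increment formula in the form stated.
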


\bigskip

\begin{proof}
The identities follow immediately from Lemma \ref{LemmaZ1} and the definition
of variance in terms of $z$ and $y.$ I.e.,
\begin{align}
Var\left[  X\left(  t\right)  \right]   &  =E\left[  X\left(  t\right)
\right]  ^{2}-\left[  EX\left(  t\right)  \right]  ^{2}\nonumber\\
&  =z\left(  t\right)  -y\left(  t\right)  ^{2}=\sigma^{2}z_{1}\left(
t\right) \nonumber\\
&  =\sigma^{2}\int_{t_{0}}^{t}e^{\left(  2-\sigma^{2}\right)  \left(
s-t\right)  }\left[  1+x_{a}\left(  s\right)  -y\left(  s\right)  \right]
^{2}ds.
\end{align}

\end{proof}

\bigskip

\begin{remark}
The maximum value of $Var\left[  X\left(  t+\Delta t\right)  \right]
-Var\left[  X\left(  t\right)  \right]  $ occurs for $t$ such that the average
weighted value of $w\left(  s\right)  $ with exponential weighting of $\left(
2-\sigma^{2}\right)  $ is maximal on $\left(  t,t+\Delta t\right)  .$
\end{remark}

Using the lemmas above, we obtain directly the following result.

\bigskip

\begin{theorem}
\label{ThmQ} Let $x_{a}$ be continuous. Then we have the identities,
\begin{align}
\lim_{\Delta t\rightarrow0}\sigma^{-2}\left(  \Delta t\right)  ^{-1}V\left(
t,t+\Delta t\right)   &  =\lim_{\Delta t\rightarrow0}\sigma^{-2}\left(  \Delta
t\right)  ^{-1}V_{3}\left(  t,t+\Delta t\right) \nonumber\\
&  =w\left(  t\right)  +Var\left[  X\left(  t\right)  \right]
\ \ \ \ \ \label{w}\nonumber\\
i.e.,\ \ \ \ \ \ \ \ \ \ \ \ \ \ \ \ \ \sigma^{-2}\mathbb{V}\left(  t\right)
&  =w\left(  t\right)  +\sigma^{2}\int_{t_{0}}^{t}e^{\left(  2-\sigma\right)
^{2}\left(  s-t\right)  }w\left(  s\right)  ds
\end{align}%
\begin{align}
Q\left(  t\right)   &  :=\frac{d}{dt}\lim_{\Delta t\rightarrow0}\sigma
^{-2}\frac{V\left(  t,t+\Delta t\right)  }{\Delta t}=\sigma^{-2}\frac{d}%
{dt}\mathbb{V}\left(  t\right) \label{Q}\nonumber\\
&  =w^{\prime}\left(  t\right)  +\sigma^{2}w\left(  t\right)  -\sigma
^{2}\left(  2-\sigma^{2}\right)  \int_{t_{0}}^{t}e^{\left(  2-\sigma\right)
^{2}\left(  s-t\right)  }w\left(  s\right)  ds.
\end{align}

\end{theorem}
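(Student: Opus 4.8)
The plan is to assemble the claimed formulas for $\mathbb V(t)$ and $Q(t)$ directly from the results already in hand, with the only new work being the differentiation of a variable-limit integral. First I would record the chain that collapses $V$ to $V_3$: by Lemma \ref{Vest} we have $|V_1(t,t+\Delta t)|\le C(\Delta t)^2$ and $|V_2(t,t+\Delta t)|\le C(\Delta t)^{3/2}$, so dividing $V=V_1+V_2+V_3$ by $\Delta t$ and letting $\Delta t\to 0$ kills the first two contributions; this gives the first displayed equality $\lim_{\Delta t\to 0}\sigma^{-2}(\Delta t)^{-1}V=\lim_{\Delta t\to 0}\sigma^{-2}(\Delta t)^{-1}V_3$. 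Next, from the Lemma evaluating $\mathbb V(t)$ (the one giving $\mathbb V(t)=\sigma^2[1+x_a-y]^2+Var[X]$, i.e. equation (\ref{boldV})), rewrite $\sigma^{-2}\mathbb V(t)=w(t)+\sigma^{-2}Var[X(t)]$ using the notation $w(s):=[1+x_a(s)-y(s)]^2$. Then substitute the closed form of the variance from Theorem \ref{ThmVar}, namely $Var[X(t)]=\sigma^2\int_{t_0}^{t}e^{(2-\sigma^2)(s-t)}[y(s)-(1+x_a(s))]^2\,ds=\sigma^2\int_{t_0}^{t}e^{(2-\sigma^2)(s-t)}w(s)\,ds$, since $[y-(1+x_a)]^2=[1+x_a-y]^2=w$. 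This yields exactly $\sigma^{-2}\mathbb V(t)=w(t)+\sigma^2\int_{t_0}^{t}e^{(2-\sigma^2)(s-t)}w(s)\,ds$, the second claimed identity.

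For $Q(t)$ I would differentiate this last expression in $t$. The term $w(t)$ contributes $w'(t)$. For the integral term $\sigma^2\int_{t_0}^{t}e^{(2-\sigma^2)(s-t)}w(s)\,ds$, pull the $t$-dependence apart by writing the integrand as $e^{-(2-\sigma^2)t}\cdot e^{(2-\sigma^2)s}w(s)$, so the term equals $\sigma^2 e^{-(2-\sigma^2)t}\int_{t_0}^{t}e^{(2-\sigma^2)s}w(s)\,ds$. Now apply the product rule: the derivative of the prefactor $e^{-(2-\sigma^2)t}$ brings down a factor $-(2-\sigma^2)$ times the whole term, i.e. $-\sigma^2(2-\sigma^2)\int_{t_0}^{t}e^{(2-\sigma^2)(s-t)}w(s)\,ds$; differentiating the integral via the fundamental theorem of calculus gives $\sigma^2 e^{-(2-\sigma^2)t}\cdot e^{(2-\sigma^2)t}w(t)=\sigma^2 w(t)$. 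Summing gives $Q(t)=w'(t)+\sigma^2 w(t)-\sigma^2(2-\sigma^2)\int_{t_0}^{t}e^{(2-\sigma^2)(s-t)}w(s)\,ds$, which is the stated formula (noting that the exponents written as $e^{(2-\sigma)^2(s-t)}$ in the theorem statement should read $e^{(2-\sigma^2)(s-t)}$, consistent with $c=2-\sigma^2$ from Theorem \ref{ThmVar}).

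There is essentially no hard obstacle here — the theorem is a bookkeeping consolidation of Lemma \ref{Vest}, equation (\ref{boldV}), and Theorem \ref{ThmVar}. The one point requiring a word of justification is the differentiation under/of the variable-limit integral, which is legitimate because $x_a$ and $y$ are continuous (hence $w$ is continuous), so $s\mapsto e^{(2-\sigma^2)s}w(s)$ is continuous and the fundamental theorem of calculus applies; continuity of $w$ also guarantees $Q(t)$ is well defined. If one wants $Q$ itself to be differentiable later, one notes $w$ is as smooth as $x_a$ permits, since $y$ solves the linear ODE (\ref{Eqy}) and is therefore $C^1$ whenever $y_a=x_a$ is continuous. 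I would keep the write-up to these three short moves, citing the earlier lemmas rather than re-deriving anything.
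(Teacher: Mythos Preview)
Your approach is exactly what the paper does: it states the theorem follows ``directly'' from the preceding lemmas, and you correctly identify these as Lemma \ref{Vest} (killing $V_1,V_2$), equation (\ref{boldV}), and Theorem \ref{ThmVar}, with the only added work being the Leibniz differentiation of the variable-limit integral, which you carry out correctly. One bookkeeping slip: from (\ref{boldV}) as printed you wrote $\sigma^{-2}\mathbb V(t)=w(t)+\sigma^{-2}Var[X(t)]$, but then substituting $Var[X(t)]=\sigma^{2}\int\ldots$ would give $w(t)+\int\ldots$, not $w(t)+\sigma^{2}\int\ldots$; the resolution is that (\ref{boldV}) itself drops a $\sigma^{2}$ in front of $Var[X]$ (compare the $V_3$ lemma just above it), so the correct chain is $\sigma^{-2}\mathbb V(t)=w(t)+Var[X(t)]=w(t)+\sigma^{2}\int_{t_0}^{t}e^{(2-\sigma^{2})(s-t)}w(s)\,ds$, and you were right to flag the exponent typo $(2-\sigma)^{2}$ versus $2-\sigma^{2}$.
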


\bigskip

\section{Market extrema}

The main objective of this section is to apply the results above understand
the temporal relationship between the extrema of the (log) fundamental value,
$x_{a}\left(  t\right)  $, and the expected (log) trading price, $y\left(
t\right)  .$

\subsection{Price Maxima}

\begin{notation}
Let $t_{0}$ be the initial time, and $t_{m}$ be defined by $x_{a}^{\prime
}\left(  t_{m}\right)  =0,$ i.e., the time at which the fundamental value,
$x_{a},$ attains its maximum. The time $t_{\ast}$ is defined as the first time
at which $y^{\prime}\left(  t_{\ast}\right)  =x_{a}\left(  t_{\ast}\right)
-y\left(  t_{\ast}\right)  $ vanishes, and the curves $x_{a}\left(  t\right)
$ and $y\left(  t\right)  $ first intersect.
\end{notation}

\begin{notation}
Let $\hat{x}_{a}\left(  t\right)  :=e^{t}x_{a}\left(  t\right)  $, $\hat
{y}\left(  t\right)  :=e^{t}y\left(  t\right)  ,$ $\hat{y}_{0}:=e^{t_{0}}%
\hat{y}\left(  t_{0}\right)  .$
\end{notation}

\textbf{Condition} $\sigma$ . Let $\sigma\in\left(  0,1\right)  $ be a
constant, so $c:=2-\sigma^{2}\in\left(  1,2\right)  .$

We will assume this condition throughout, though some results are valid
without it.

\textbf{Condition} $C.$ $\left(  i\right)  $ The function $x_{a}:[t_{0}%
,\infty)\rightarrow\left(  0,\infty\right)  $ has the property that for some
$t_{m}\in\left(  0,\infty\right)  $ one has

$\left(  i\right)  \ \ \ \ \ \ x_{a}^{\prime}\left(  t\right)
>0\ \ if\ \ t<t_{m};\ \ x_{a}^{\prime}\left(  t_{m}\right)  =0;\ \ x_{a}%
^{\prime}\left(  t\right)  <0\ \ if\ t>t_{m}.$

$\left(  ii\right)  $ \ Let $y\left(  t_{0}\right)  =:$ $y_{0}\in\left(
0,\infty\right)  $ one has%
\begin{equation}
x_{a}\left(  t_{0}\right)  -x_{a}^{\prime}\left(  t_{0}\right)  <y_{0}%
<x_{a}\left(  t_{0}\right)  .
\end{equation}

$\left(  iii\right)  $ For some $\delta,m_{1}\in\left(  0,\infty\right)  $ one
has
\begin{equation}
-x_{a}^{\prime}\left(  t\right)  >m_{1}>0\ \ if\ \ t>t_{m}+\delta.
\end{equation}

\textbf{Remarks. }We set $y_{0}=:y\left(  t_{0}\right)  ,$ so the two
inequalities in\textbf{ }Condition $C\left(  ii\right)  $ state that initially
(i.e., at $t_{0}$) the price is below the fundamental value, i.e.,
undervaluation ($y\left(  t_{0}\right)  =y_{0}<x_{a}\left(  t_{0}\right)$).
Using the ODE $y^{\prime}=x_{a}-y$ one has that the first inequality in
Condition $C\left(  ii\right)  $ is equivalent to \ $x_{a}^{\prime}\left(
t_{0}\right)  >y^{\prime}\left(  t_{0}\right)  >0$ stipulating that the
valuation has begun to increase relative to trading price. Condition $C\left(
iii\right)  $ can be relaxed to some extent although the condition then
appears more technical.

\bigskip

\textbf{Condition} $E.$ With $t_{\ast}$ be defined as above, assume
$2x_{a}^{\prime}(t_{\ast})+\sigma^{2}e^{c\left(  t_{0}-t_{\ast}\right)  }<0.$

\bigskip

\textbf{Remarks.} Note that this condition is satisfied automatically if
$t_{0}\rightarrow-\infty.$ So long as there is an interval $\left(
t_{m},t_{\ast}\right)  $ on which $x_{a}^{\prime}\left(  t_{\ast}\right)  <$
$-\sigma^{2}e^{c\left(  t_{0}-t_{\ast}\right)  }$ (the latter is exponentially
small if $t_{\ast}-t_{0}>>1$) the Condition $E$ will be satisfied.

Recalling that $y\left(  t\right)  $ is given by $\left(  \ref{Solny}\right)
$, i.e.,%
\begin{equation}
\hat{y}\left(  t\right)  =\hat{y}\left(  t_{0}\right)  +\int_{t_{0}}^{t}%
\hat{x}_{a}\left(  s\right)  ds. \label{Eqnyhat}%
\end{equation}
since $y_{a}=x_{a}$ as the latter is deterministic.

Initially, we have from $C\left(  ii\right)  $ that $x_{a}\left(
t_{0}\right)  >y\left(  t_{0}\right)  .$ We want to first prove that $y$
intersects with $x_{a}$ at some value $t_{\ast}$ and that this value $t_{\ast
}$ occurs after $t_{m}$ (i.e., the time at which $x_{a}$ has its peak).

\begin{theorem}
\label{Thmtopt*}Assume that $C$ holds. Then there exists a least value
$t_{\ast}\in\left(  t_{m},\infty\right)  $ such that for $t<t_{\ast}$ one has
$y\left(  t\right)  <x_{a}\left(  t\right)  ,$ and, $y\left(  t_{\ast}\right)
<x_{a}\left(  t_{\ast}\right)  ,$ i.e.,
\begin{equation}
\hat{y}\left(  t_{\ast}\right)  =\hat{y}_{0}+\int_{t_{0}}^{t_{\ast}}\hat
{x}_{a}\left(  s\right)  ds=\hat{x}_{a}\left(  t_{\ast}\right)  .
\label{Eqnyhat*}%
\end{equation}
Since $y^{\prime}=x_{a}-y,$ the maximum of $y$ is attained at $t_{\ast}$.
\end{theorem}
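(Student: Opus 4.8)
The plan is to study the single auxiliary function $\phi\left(t\right):=\hat{x}_{a}\left(t\right)-\hat{y}\left(t\right)=e^{t}\left(x_{a}\left(t\right)-y\left(t\right)\right)$, whose zeros are exactly the intersection points of $x_{a}$ and $y$. First I would extract a first-order identity for $\phi$: differentiating \eqref{Eqnyhat} gives $\hat{y}^{\prime}=\hat{x}_{a}$, hence $\phi^{\prime}\left(t\right)=\hat{x}_{a}^{\prime}\left(t\right)-\hat{x}_{a}\left(t\right)=e^{t}x_{a}^{\prime}\left(t\right)$, so that $\phi\left(t\right)=\phi\left(t_{0}\right)+\int_{t_{0}}^{t}e^{s}x_{a}^{\prime}\left(s\right)ds$ with $\phi\left(t_{0}\right)=e^{t_{0}}\left(x_{a}\left(t_{0}\right)-y_{0}\right)>0$ by Condition $C\left(ii\right)$ (only the upper inequality $y_{0}<x_{a}\left(t_{0}\right)$ is used here).

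Second, Condition $C\left(i\right)$ makes the integrand $e^{s}x_{a}^{\prime}\left(s\right)$ nonnegative on $\left[t_{0},t_{m}\right]$, so $\phi\left(t\right)\geq\phi\left(t_{0}\right)>0$ there; equivalently $y<x_{a}$ on $\left[t_{0},t_{m}\right]$, so no intersection can occur at or before $t_{m}$. Third, Condition $C\left(iii\right)$ forces an intersection afterwards: for $t>t_{m}+\delta$ one has $e^{s}x_{a}^{\prime}\left(s\right)<-m_{1}e^{s}$, whence $\phi\left(t\right)\leq\phi\left(t_{m}+\delta\right)-m_{1}\left(e^{t}-e^{t_{m}+\delta}\right)\to-\infty$ as $t\to\infty$. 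Since $\phi$ is continuous with $\phi\left(t_{m}\right)>0$, the intermediate value theorem yields a least zero $t_{\ast}\in\left(t_{m},\infty\right)$, which is the desired first intersection time; this gives \eqref{Eqnyhat*}, and $y\left(t\right)<x_{a}\left(t\right)$ for all $t<t_{\ast}$ by the second step together with the minimality of $t_{\ast}$.

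Finally, because $\phi^{\prime}=e^{t}x_{a}^{\prime}<0$ on all of $\left(t_{m},\infty\right)$ by $C\left(i\right)$, $\phi$ is strictly decreasing there, so $\phi<0$ for $t>t_{\ast}$. Since $y^{\prime}=x_{a}-y=e^{-t}\phi$, this shows $y$ is strictly increasing on $\left[t_{0},t_{\ast}\right)$ and strictly decreasing on $\left(t_{\ast},\infty\right)$, so the maximum of $y$ is attained at $t_{\ast}$.

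I expect no deep obstacle here: the argument is an integrating-factor computation plus the intermediate value theorem. The only care needed is in tracking which hypothesis does what — $C\left(ii\right)$ gives the positive initial value of $\phi$, $C\left(i\right)$ both keeps $\phi$ positive up to $t_{m}$ and makes it strictly monotone afterwards (so $t_{\ast}$ is unique and is a genuine maximum of $y$, not merely a critical point), and $C\left(iii\right)$ is precisely what is needed to drive $\phi$ below zero in finite time. It is worth remarking that the lower inequality $x_{a}\left(t_{0}\right)-x_{a}^{\prime}\left(t_{0}\right)<y_{0}$ in Condition $C\left(ii\right)$ plays no role in this particular theorem and will instead enter the subsequent volatility comparison.
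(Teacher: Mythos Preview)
Your proposal is correct and follows essentially the same approach as the paper: your function $\phi(t)=\hat{x}_{a}(t)-\hat{y}(t)$ is exactly the paper's $I(t):=\hat{x}_{a}(t)-\hat{y}_{0}-\int_{t_{0}}^{t}\hat{x}_{a}(s)\,ds$ (by \eqref{Eqnyhat}), and both proofs compute $\phi'(t)=I'(t)=e^{t}x_{a}'(t)$ and then apply the intermediate value theorem using $C(i)$--$C(iii)$. Your write-up is in fact more careful than the paper's, which contains a sign typo (``$I(t)<0$ if $t<t_{m}$'' should read $>0$) and does not spell out the strict monotonicity of $\phi$ past $t_{m}$ that guarantees uniqueness of $t_{\ast}$ and that $y$ attains a genuine maximum there; your observation that only the upper inequality in $C(ii)$ is used here is also correct.
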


\begin{proof}
Let $I\left(  t\right)  :=\hat{x}_{a}\left(  t\right)  -\hat{y}_{0}%
-\int_{t_{0}}^{t}\hat{x}_{a}\left(  s\right)  ds,$ so $I\left(  t_{0}\right)
>0$ by condition $C\left(  ii\right)  .$ Computing the derivative and using
Condition $C\left(  i\right)  $ yields%
\begin{equation}
I^{\prime}\left(  t\right)  =\hat{x}_{a}^{\prime}\left(  t\right)  -\hat
{x}_{a}\left(  t\right)  =e^{t}x_{a}^{\prime}\left(  t\right)  >0\ if\ t<t_{m}%
\ .
\end{equation}
Hence, one has $I\left(  t\right)  <0$ if $t<t_{m}\ .$ On the other hand, by
Condition $C\left(  iii\right)  $, when $t>t_{m}+\delta$ one has%
\begin{equation}
I^{\prime}\left(  t\right)  =e^{t}x_{a}^{\prime}\left(  t\right)  \leq
e^{t_{m}}\left(  -m_{1}\right)
\end{equation}
so that $I\left(  t_{\ast}\right)  =0$ for some finite $t_{\ast}>t_{m}.$
\end{proof}
\begin{lemma}
Under $C\left(  i\right)  ,\left(  ii\right)  $ there exists a $t_{1}%
\in\left(  t_{0},t_{m}\right)  $ such that $w^{\prime}\left(  t_{1}\right)
=0$, $w^{\prime}\left(  t\right)  >0$ if $t\in\lbrack t_{0},t_{1}),$ and
$w^{\prime}\left(  t\right)  <0$ if $t_{m}<t<t_{\ast}\ .$ Consequently, we have
\begin{equation}
t_{0}<t_{1}<t_{m}<t_{\ast}\ . \label{TopOrdert}%
\end{equation}
\end{lemma}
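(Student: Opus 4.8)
The plan is to reduce the entire statement to the sign behaviour of the single function $u(t):=1+x_{a}(t)-y(t)$ and its derivative. Since $w=u^{2}$ we have $w'=2uu'$, and since $y'=x_{a}-y$ by $\left(\ref{Eqy}\right)$ we get $u'=x_{a}'-y'=x_{a}'-x_{a}+y$. The only input from outside this section is Theorem~\ref{Thmtopt*}, which supplies the existence of $t_{\ast}$, the inequality $t_{m}<t_{\ast}$, and $y(t)<x_{a}(t)$ for all $t\in[t_{0},t_{\ast})$. In particular $u(t)=1+(x_{a}(t)-y(t))>1>0$ on $[t_{0},t_{\ast})$, so on that interval $w'$ has the same sign as $u'$, and it suffices to track $u'$ there.

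Next I would evaluate $u'$ at the two relevant points using only $C(i),(ii)$. At $t_{0}$, Condition $C(ii)$ is exactly $0<x_{a}(t_{0})-y_{0}=y'(t_{0})<x_{a}'(t_{0})$, whence $u'(t_{0})=x_{a}'(t_{0})-y'(t_{0})>0$; note this also gives $x_{a}'(t_{0})>0$, hence $t_{0}<t_{m}$ by $C(i)$. At $t_{m}$, since $x_{a}'(t_{m})=0$ we obtain $u'(t_{m})=-y'(t_{m})=-(x_{a}(t_{m})-y(t_{m}))<0$, because $y(t_{m})<x_{a}(t_{m})$ by Theorem~\ref{Thmtopt*} (using $t_{m}<t_{\ast}$). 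By the intermediate value theorem $u'$ vanishes somewhere in $(t_{0},t_{m})$; I take $t_{1}$ to be the least zero of $u'$ in $(t_{0},\infty)$. Continuity of $u'$ together with $u'(t_{0})>0$ forces $t_{1}>t_{0}$ and $u'>0$ on $[t_{0},t_{1})$, while the zero produced by the IVT forces $t_{1}<t_{m}$; since $u>0$ on $[t_{0},t_{1})\subset[t_{0},t_{\ast})$, this yields $w'(t_{1})=0$ and $w'>0$ on $[t_{0},t_{1})$.

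For $t\in(t_{m},t_{\ast})$ I would argue directly: $x_{a}'(t)<0$ by $C(i)$, while $y'(t)=x_{a}(t)-y(t)>0$ by Theorem~\ref{Thmtopt*}, so $u'(t)=x_{a}'(t)-y'(t)<0$; combined with $u(t)>0$ this gives $w'(t)<0$. Collecting the pieces, $t_{1}\in(t_{0},t_{m})$ by construction and $t_{m}<t_{\ast}$ by Theorem~\ref{Thmtopt*}, so $t_{0}<t_{1}<t_{m}<t_{\ast}$, and all three sign assertions for $w'$ are established.

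There is no genuinely hard step here; the only thing that demands care is the bookkeeping. One must correctly unpack the double inequality in $C(ii)$ into the two separate facts $y'(t_{0})>0$ and $u'(t_{0})>0$, and one must notice that the lemma asserts \emph{nothing} about the sign of $w'$ on the middle interval $(t_{1},t_{m})$ — so no monotonicity of $u'$ there is needed, and in particular no second-derivative hypothesis on $x_{a}$ enters the argument; only the boundary signs of $u'$ plus one application of the intermediate value theorem do the work, with Theorem~\ref{Thmtopt*} carrying all the information about $t_{\ast}$ and about the sign of $x_{a}-y$.
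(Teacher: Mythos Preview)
Your proof is correct and follows essentially the same approach as the paper: both reduce to studying the sign of $S(t)=x_{a}'(t)-y'(t)$ (your $u'$), verify $S(t_{0})>0$ from $C(ii)$, verify $S<0$ on $(t_{m},t_{\ast})$ from $C(i)$ and Theorem~\ref{Thmtopt*}, and apply the intermediate value theorem to locate the first zero $t_{1}$. Your version is marginally more careful in checking $u'(t_{m})<0$ explicitly to pin down $t_{1}<t_{m}$, but otherwise the arguments coincide.
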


\begin{proof}
Recall $\left(  \ref{w}\right)  $ and note $w^{\prime}=2\left[  1+x_{a}%
-y\right]  \left(  x_{a}^{\prime}-y^{\prime}\right)  ,$ whose sign is
determined by
\begin{equation}
S\left(  t\right)  :=x_{a}^{\prime}\left(  t\right)  -y^{\prime}\left(
t\right)  =x_{a}^{\prime}\left(  t\right)  -x_{a}\left(  t\right)  +y\left(
t\right)
\end{equation}
when $t<t_{\ast}$ [i.e., when $x_{a}\left(  t\right)  >y\left(  t\right)  $].
For $t_{0}$ we have from $C\left(  ii\right)  $ that $S\left(  t_{0}\right)
>0.$

For \ $t_{m}<t\,<t_{\ast}$ we have from $C\left(  i\right)  $ that
$x_{a}^{\prime}\left(  t\right)  <0$ while $y^{\prime}\left(  t\right)
=x_{a}\left(  t\right)  -y\left(  t\right)  >0$ as noted earlier in the proof,
yielding%
\begin{equation}
S\left(  t\right)  =x_{a}^{\prime}\left(  t\right)  -x_{a}\left(  t\right)
+y\left(  t\right)  <0.
\end{equation}
By continuity, there exists a $t_{1}\in\left(  t_{0},t_{m}\right)  $ such that
$S\left(  t_{1}\right)  =0$ and $S\left(  t\right)  >0$ for $t<t_{1}.$ I.e.,
$t_{1}$ is the first crossing for $S\left(  t\right)  $ and hence for
$w\left(  t\right)  .$ The ordering $\left(  \ref{TopOrdert}\right)  $ thus follows.
\end{proof}

\begin{lemma}
Assuming Condition $C,$ one has $Q\left(  t_{1}\right)  >0.$
\end{lemma}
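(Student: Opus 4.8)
The plan is to evaluate the expression for $Q(t)$ from Theorem \ref{ThmQ} at the point $t=t_1$ and show every contributing term is nonnegative, with at least one strictly positive. Recall from \eqref{Q} that
\begin{equation}
Q(t) = w^{\prime}(t) + \sigma^{2} w(t) - \sigma^{2}(2-\sigma^{2})\int_{t_0}^{t} e^{(2-\sigma^2)(s-t)} w(s)\, ds .
\end{equation}
By the previous lemma, $w^{\prime}(t_1)=0$, so the first term drops out. The second term $\sigma^2 w(t_1) = \sigma^2[1+x_a(t_1)-y(t_1)]^2 \ge 0$, and in fact it is strictly positive unless $1+x_a(t_1)-y(t_1)=0$; under Condition $C$ we have $x_a(t_1)>y(t_1)$ (since $t_1<t_*$), so $1+x_a(t_1)-y(t_1)>1>0$ and this term is strictly positive. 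So the whole question reduces to controlling the sign of the integral term, or rather showing it cannot overwhelm $\sigma^2 w(t_1)$.

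The key step is therefore to bound $\sigma^2(2-\sigma^2)\int_{t_0}^{t_1} e^{(2-\sigma^2)(s-t_1)} w(s)\, ds$ from above by something strictly smaller than $\sigma^2 w(t_1)$. Here I would use the monotonicity of $w$ on $[t_0,t_1)$ established in the previous lemma: $w^{\prime}(t)>0$ for $t\in[t_0,t_1)$, hence $w(s)\le w(t_1)$ for all $s\in[t_0,t_1]$. Substituting this bound and using $c:=2-\sigma^2>0$,
\begin{equation}
\sigma^{2} c\int_{t_0}^{t_1} e^{c(s-t_1)} w(s)\, ds \le \sigma^{2} c\, w(t_1)\int_{t_0}^{t_1} e^{c(s-t_1)}\, ds = \sigma^{2} w(t_1)\bigl(1 - e^{c(t_0-t_1)}\bigr) < \sigma^{2} w(t_1) ,
\end{equation}
since $e^{c(t_0-t_1)}>0$. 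Therefore
\begin{equation}
Q(t_1) = \sigma^2 w(t_1) - \sigma^2 c\int_{t_0}^{t_1} e^{c(s-t_1)} w(s)\, ds \ge \sigma^2 w(t_1) e^{c(t_0-t_1)} > 0 ,
\end{equation}
which is the claim.

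The main obstacle is really just making sure the monotonicity input is applied over the correct interval and that the sign of $c=2-\sigma^2$ is positive — both are guaranteed: Condition $\sigma$ gives $\sigma\in(0,1)$, hence $c\in(1,2)>0$, and the previous lemma gives $w^{\prime}>0$ precisely on $[t_0,t_1)$ with $w^{\prime}(t_1)=0$. One should also double-check that $w(t_1)>0$ strictly (not merely $\ge 0$), which follows from $x_a(t_1)-y(t_1)>0$ as noted; if one only had $w(t_1)\ge 0$ the final inequality would degrade to $Q(t_1)\ge 0$, so the strict undervaluation at $t_1$ (a consequence of $t_1<t_*$ and Theorem \ref{Thmtopt*}) is what is actually being used to get strict positivity. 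No delicate estimate is needed beyond this elementary exponential integral.
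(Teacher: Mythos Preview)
Your proof is correct and follows essentially the same route as the paper: use $w'(t_1)=0$, bound $w(s)\le w(t_1)$ on $[t_0,t_1]$ via the monotonicity from the preceding lemma, evaluate the exponential integral explicitly, and conclude $Q(t_1)\ge \sigma^2 w(t_1)e^{c(t_0-t_1)}>0$. Your write-up is in fact a bit more explicit than the paper's about why each step is justified (positivity of $c$, strict positivity of $w(t_1)$ via $t_1<t_\ast$), but the argument is identical.
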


\begin{proof}
Since $t_{1}<t_{m}<t_{\ast}$ one has $x_{a}\left(  t_{1}\right)  >y\left(
t_{1}\right)  $ and consequently $w\left(  t_{1}\right)  $ exceeds $1$ and is
thus positive. Hence, we can replace $w\left(  s\right)  $ by $w\left(
t_{1}\right)  $ in the integral, and factor, in order to obtain the inequality%
\begin{align}
Q\left(  t_{1}\right)   &  \geq0+\sigma^{2}w\left(  t_{1}\right)  -\sigma
^{2}c\int_{t_{0}}^{t_{1}}e^{c\left(  s-t_{1}\right)  }w\left(  t_{1}\right)
ds\nonumber\\
&  =\sigma^{2}w\left(  t_{1}\right)  \left\{  1-\left(  1-e^{c\left(
t_{0}-t_{1}\right)  }\right)  \right\} \nonumber\\
&  =\sigma^{2}w\left(  t_{1}\right)  e^{c\left(  t_{0}-t_{1}\right)  }>0.\
\end{align}

\end{proof}

\begin{lemma}
If Conditions $C$ and $E$ hold, then $Q\left(  t_{\ast}\right)  <0.$
\end{lemma}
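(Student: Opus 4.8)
The plan is to evaluate the explicit formula for $Q(t_\ast)$ from Theorem \ref{ThmQ} at $t=t_\ast$ and to show that each piece is controlled so that the total is negative. Recall
\[
Q(t)=w'(t)+\sigma^2 w(t)-\sigma^2 c\int_{t_0}^{t}e^{c(s-t)}w(s)\,ds,
\]
with $w(s)=[1+x_a(s)-y(s)]^2$ and $c=2-\sigma^2\in(1,2)$. At $t=t_\ast$ the curves $x_a$ and $y$ cross, so $x_a(t_\ast)-y(t_\ast)=0$ and hence $w(t_\ast)=1$. The middle term is therefore simply $\sigma^2$. The first term is $w'(t_\ast)=2[1+x_a(t_\ast)-y(t_\ast)](x_a'(t_\ast)-y'(t_\ast))=2(x_a'(t_\ast)-y'(t_\ast))$, and since $y'(t_\ast)=x_a(t_\ast)-y(t_\ast)=0$, this equals $2x_a'(t_\ast)$. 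So $w'(t_\ast)+\sigma^2 w(t_\ast)=2x_a'(t_\ast)+\sigma^2$.

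For the integral term, the key is a lower bound. Since $w(s)\ge 0$ everywhere, dropping it would only help if it appeared with a plus sign, but here it has a minus sign, so I need an upper bound on $\int_{t_0}^{t_\ast}e^{c(s-t_\ast)}w(s)\,ds$ — or rather a lower bound, because it is subtracted. Actually the cleanest route: bound the integral below by restricting to a neighborhood of $t_\ast$ where $w$ is close to $1$, OR — more robustly — use that $w(s)\ge$ something near $t_\ast$. The simplest rigorous move is to note $e^{c(s-t_\ast)}\le 1$ on $[t_0,t_\ast]$ is the wrong direction; instead I want $\int_{t_0}^{t_\ast}e^{c(s-t_\ast)}w(s)\,ds \ge (\text{small positive quantity})$. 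Since $w$ is continuous and $w(t_\ast)=1$, there is a subinterval near $t_\ast$ where $w\ge 1/2$, giving a positive lower bound; but to get the clean statement I suspect the authors instead use the crude bound obtained by writing $\sigma^2 c\int_{t_0}^{t_\ast}e^{c(s-t_\ast)}w(s)\,ds \ge 0$ is too weak. The right identity is: from the proof structure of the previous lemma, $\sigma^2 c\int_{t_0}^{t_\ast}e^{c(s-t_\ast)}\,ds=\sigma^2(1-e^{c(t_0-t_\ast)})$, and since $w(s)\ge$ (its behavior near $t_\ast$) one gets a bound of the form $\sigma^2 c\int \ge \sigma^2(1-e^{c(t_0-t_\ast)})\cdot(\text{something})$. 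The truly clean path that matches Condition $E$: one only needs $-\sigma^2 c\int_{t_0}^{t_\ast}e^{c(s-t_\ast)}w(s)\,ds\le -\sigma^2 c\, e^{c(t_0-t_\ast)}\int_{t_0}^{t_\ast}w(s)\,ds$ is again awkward. I expect the intended argument simply throws away the entire (nonpositive) integral term, i.e.\ uses $Q(t_\ast)\le w'(t_\ast)+\sigma^2 w(t_\ast)=2x_a'(t_\ast)+\sigma^2$, and then observes that Condition $E$, namely $2x_a'(t_\ast)+\sigma^2 e^{c(t_0-t_\ast)}<0$, together with $e^{c(t_0-t_\ast)}<1$ (since $c>0$ and $t_0<t_\ast$), gives $2x_a'(t_\ast)+\sigma^2<2x_a'(t_\ast)+\sigma^2$... which is not quite it. So the integral term must be retained and bounded below by $\sigma^2(1-e^{c(t_0-t_\ast)})\cdot\inf_{[t_0,t_\ast]} w$ — but $w$ need not be bounded below by $1$ on all of $[t_0,t_\ast]$.

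So here is the argument I would actually write: bound the integral below by dropping $[t_0, t_1]$ and a further truncation is unnecessary if one instead bounds $w(s)\ge$ crude constant; more carefully, one has $\int_{t_0}^{t_\ast} e^{c(s-t_\ast)} w(s)\,ds \ge 0$, hence $Q(t_\ast)\le 2x_a'(t_\ast)+\sigma^2 - 0$; this is insufficient, so instead I retain it and use $w\ge 0$ differently: write $Q(t_\ast)=2x_a'(t_\ast)+\sigma^2-\sigma^2 c\int_{t_0}^{t_\ast}e^{c(s-t_\ast)}w(s)\,ds$ and bound the integral below using that on $[t_0,t_1]$, $w$ exceeds $1$ (shown in the previous lemma: $w(t)>1$ for $t<t_\ast$ since $x_a>y$ there, hence $1+x_a-y>1$). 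Therefore $w(s)>1$ on all of $[t_0,t_\ast)$, so
\[
\sigma^2 c\int_{t_0}^{t_\ast}e^{c(s-t_\ast)}w(s)\,ds>\sigma^2 c\int_{t_0}^{t_\ast}e^{c(s-t_\ast)}\,ds=\sigma^2\bigl(1-e^{c(t_0-t_\ast)}\bigr).
\]
Hence $Q(t_\ast)<2x_a'(t_\ast)+\sigma^2-\sigma^2(1-e^{c(t_0-t_\ast)})=2x_a'(t_\ast)+\sigma^2 e^{c(t_0-t_\ast)}<0$ by Condition $E$. The main obstacle is exactly this sign bookkeeping on the integral term: making sure the crucial inequality $w(s)>1$ on $[t_0,t_\ast)$ is in hand (it follows from $x_a(s)>y(s)$ there, established in Theorem \ref{Thmtopt*}), and that the exponential-weight integral is evaluated correctly so that the leftover matches Condition $E$ term-for-term.
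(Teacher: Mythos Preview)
Your proposal is correct and follows essentially the same route as the paper: compute $w(t_\ast)=1$ and $w'(t_\ast)=2x_a'(t_\ast)$, use $x_a(s)>y(s)$ on $[t_0,t_\ast)$ from Theorem \ref{Thmtopt*} to get $w(s)\ge 1$ there, bound the integral below by $\sigma^2(1-e^{c(t_0-t_\ast)})$, and conclude via Condition $E$. The only difference is presentation---the paper writes the argument linearly without the exploratory detours.
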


\begin{proof}
We write%
\begin{equation}
Q\left(  t_{\ast}\right)  =w^{\prime}\left(  t_{\ast}\right)  +\sigma
^{2}w\left(  t_{\ast}\right)  -\sigma^{2}c\int_{t_{0}}^{t_{\ast}}e^{c\left(
s-t_{\ast}\right)  }w\left(  s\right)  ds,
\end{equation}
and note that for any $t\leq t_{\ast}$ one has $x_{a}\left(  t\right)
>y\left(  t\right)  $ by Thm \ref{Thmtopt*}. Consequently, we have the
inequality%
\begin{equation}
w\left(  t\right)  =\left[  1+x_{a}\left(  t\right)  -y\left(  t\right)
\right]  ^{2}\geq1=w\left(  t_{\ast}\right)  .
\end{equation}
By using this mimimum value of $w$ that is subtracted, we have
\begin{equation}
Q\left(  t_{\ast}\right)  \leq w^{\prime}\left(  t_{\ast}\right)  +\sigma
^{2}w\left(  t_{\ast}\right)  -\sigma^{2}c\int_{t_{0}}^{t_{\ast}}e^{c\left(
s-t_{\ast}\right)  }1ds.
\end{equation}
Also, from Thm \ref{Thmtopt*}, we have $y^{\prime}\left(  t_{\ast}\right)
=x_{a}\left(  t_{\ast}\right)  -y\left(  t_{\ast}\right)  =0,$ so a
computation yields%
\begin{equation}
w^{\prime}\left(  t_{\ast}\right)  =2\left[  1+x_{a}\left(  t_{\ast}\right)
-y\left(  t_{\ast}\right)  \right]  \left(  x_{a}^{\prime}\left(  t_{\ast
}\right)  -0\right)  =2x_{a}^{\prime}\left(  t_{\ast}\right)  .
\end{equation}
Using $w\left(  t_{\ast}\right)  =1$, and evaluating the integral, one
obtains
\begin{equation}
Q\left(  t_{\ast}\right)  \leq2x_{a}^{\prime}\left(  t_{\ast}\right)
+\sigma^{2}e^{c\left(  t_{0}-t_{\ast}\right)  }<0.
\end{equation}
The last inequality follows from Condition $E$.
\end{proof}

Hence, recalling that $t_{0}<t_{1}<t_{m}<t_{\ast}$, we obtain the result that
the maximum of $Q,$ the limiting volatility precedes the peak of $y\left(
t\right)  $, which occurs at $t_{\ast}$.

\begin{theorem}
There exists a $t_{v}\in(t_{1},t_{\ast})$ such that $Q^{\prime}\left(
t_{v}\right)  =0.$
\end{theorem}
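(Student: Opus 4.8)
The plan is to realize $t_{v}$ as an \emph{interior zero of $Q'$} on $(t_{1},t_{\ast})$ by applying the intermediate value theorem to $Q'$, leveraging the two signs already in hand from the preceding lemmas, $Q(t_{1})>0$ and $Q(t_{\ast})<0$. Since $x_{a}$ is smooth, $w=[1+x_{a}-y]^{2}$ is $C^{2}$ and the weighted integral in Theorem \ref{ThmQ} is $C^{1}$, so $Q$ is $C^{1}$ and differentiating the closed form for $Q$ gives
\begin{equation}
Q'(t)=w''(t)+\sigma^{2}w'(t)-\sigma^{2}c\,w(t)+\sigma^{2}c^{2}\int_{t_{0}}^{t}e^{c(s-t)}w(s)\,ds ,
\end{equation}
with $c=2-\sigma^{2}$. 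The strategy is then to show that $Q'$ takes opposite signs at the two ends of $(t_{1},t_{\ast})$, so that its continuity forces a zero in between.

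First I would fix the sign of $Q'$ at $t_{1}$. The lemma producing $t_{1}$ shows that $w$ rises on $[t_{0},t_{1})$ and then falls, so $t_{1}$ is a local maximum of $w$, giving $w'(t_{1})=0$ and $w''(t_{1})\le 0$. Moreover $Q(t_{1})>0$ rearranges to $w(t_{1})>c\int_{t_{0}}^{t_{1}}e^{c(s-t_{1})}w(s)\,ds$; multiplying by $\sigma^{2}c>0$ shows the last two terms of $Q'(t_{1})$ are strictly negative, and together with $w'(t_{1})=0$, $w''(t_{1})\le 0$ this yields $Q'(t_{1})<0$. Next I would evaluate $Q'$ near $t_{\ast}$, where Theorem \ref{Thmtopt*} gives $x_{a}(t_{\ast})=y(t_{\ast})$, hence $w(t_{\ast})=1$ and $w'(t_{\ast})=2x_{a}'(t_{\ast})<0$; expanding $w''(t_{\ast})=2\left(x_{a}'(t_{\ast})\right)^{2}+2\left(x_{a}''(t_{\ast})-x_{a}'(t_{\ast})\right)$ and substituting into the formula above reduces the sign of $Q'(t_{\ast})$ to an explicit combination of $x_{a}'(t_{\ast})$, $x_{a}''(t_{\ast})$, $\sigma$, and the weighted integral $\int_{t_{0}}^{t_{\ast}}e^{c(s-t_{\ast})}w(s)\,ds$. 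If that combination is positive, then $Q'(t_{1})<0<Q'(t_{\ast})$ and the intermediate value theorem delivers $t_{v}\in(t_{1},t_{\ast})$ with $Q'(t_{v})=0$; an equivalent packaging is Rolle's theorem applied to $Q$ once two points of a common $Q$-value in $(t_{1},t_{\ast})$ are isolated.

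The hard part will be the second sign evaluation. Conditions $\sigma$, $C$, and $E$ constrain $x_{a}$ and $x_{a}'$ but say nothing about $x_{a}''$, and the curvature term $x_{a}''(t_{\ast})$ enters $Q'(t_{\ast})$ with an undetermined sign: a sharply decaying peak in the fundamental value makes it large and negative, in which case $Q$ may be monotone on $(t_{1},t_{\ast})$ and $Q'$ never vanish there. Thus establishing $Q'(t_{\ast})>0$ — equivalently, ruling out monotonicity of $Q$ on $(t_{1},t_{\ast})$ — is precisely the step not forced by the stated hypotheses, and it is where an additional second-order assumption on $x_{a}$ near $t_{\ast}$ (controlling $x_{a}''(t_{\ast})$ against the exponentially small factor $e^{c(t_{0}-t_{\ast})}$) must be invoked; with such a condition adjoined, the intermediate-value argument above closes. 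I would flag that, absent this curvature input, the stated hypotheses force only $Q(t_{1})>0>Q(t_{\ast})$, which by the intermediate value theorem gives the weaker first-order conclusion $Q(t_{v})=0$, an actual extremum of the limiting volatility, rather than the inflection $Q'(t_{v})=0$.
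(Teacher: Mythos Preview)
Your diagnosis is exactly right, and in fact the paper confirms it. The paper offers no separate proof of this theorem: it is stated immediately after the two lemmas giving $Q(t_{1})>0$ and $Q(t_{\ast})<0$, and the intended argument is simply the intermediate value theorem applied to $Q$ itself. The very next paragraph of the paper makes this explicit, writing ``$Q(t_{v})=\sigma^{-2}d\mathbb{V}(t_{v})/dt=0$ corresponds to a maximum in $\mathbb{V}$''. So the prime in the theorem statement is a typo; the conclusion the authors want and use is $Q(t_{v})=0$, i.e.\ an extremum of $\mathbb{V}$, not $Q'(t_{v})=0$.

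Your attempt to salvage the literal statement $Q'(t_{v})=0$ is a reasonable exercise, and your observation that it would require control on $x_{a}''(t_{\ast})$ --- which Conditions $\sigma$, $C$, $E$ do not supply --- is correct. But there is no need to pursue that route: the ``weaker first-order conclusion'' you flag at the end, $Q(t_{v})=0$ via IVT from $Q(t_{1})>0>Q(t_{\ast})$, \emph{is} the paper's proof, and it is what the surrounding discussion actually uses.
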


In summary, the derivative of $y$ catches up to $x_{a}$ at $t_{1}.$ Recalling
$\left(  \ref{Q}\right)  $, we see that $Q\left(  t_{v}\right)  =\sigma
^{-2}d\mathbb{V}\left(  t_{v}\right)  /dt=0$ corresponds to a maximum in $\mathbb{V},$
and this occurs after $t_{1}$ and before $t_{m}$ where $x_{a}$ has its peak.
The peak of $x_{a}$ precedes the peak of $y$ at $t_{\ast}.$ Thus, $\mathbb{V}$
has a maximum prior to the maxima of $x_{a}$ and $y$.

In conclusion, we have shown that the limiting volatility $\mathbb{V}\left(  t\right)
$ attains its maximum prior to that of the expected logarithm of the price,
$y\left(  t\right)$.

\section*{Appendix}

We start with%
\begin{equation}
d\log P\left( t,\omega \right) =G\left( D/S\right) dt+\frac{1}{2}\left\{ 
\frac{D}{S}G^{\prime }\left( \frac{D}{S}\right) +\frac{S}{D}G^{\prime
}\left( \frac{S}{D}\right) \right\} \sigma dW\left( t,\omega \right). \nonumber
\end{equation}
and set $G\left( x\right) :=x-1/x,$ so the model is 
\begin{equation}
d\log P=\left( \frac{D}{S}-\frac{S}{D}\right) dt+\left( \frac{D}{S}+\frac{S}{%
D}\right) \sigma dW  \label{g1}
\end{equation}%
in which the supply and demand are on a symmetric footing. In other words,
when supply exceeds demand, the price moves down in the same way as it moves
up when demand exceeds supply. The coefficient for $dW$ is symmetric in $S$
and $D.$

In order to study market tops and bottoms, we would like to simplify this
expression. We consider the regimes: $\left( i\right) $ $D$ and $S$ deviate
by a small amount, and $\left( ii\right) $ $D$ and $S$ deviate by a large
amount.

$\left( i\right) $ Suppose that $D=q+\delta ^{\prime }$ and $S=q-\varepsilon
^{\prime }$ where $q>0$ and $\delta ^{\prime }$ and $\varepsilon ^{\prime }$
are small in magnitude, i.e., one is not far from equilbrium. Then we have
with $\delta :=\delta ^{\prime }/q$ \ and $\varepsilon :=\varepsilon
^{\prime }/q$ 
\begin{eqnarray*}
\frac{D}{S}-1 &=&\frac{1+\delta }{1-\varepsilon }-1\tilde{=}\delta
+\varepsilon , \\
1-\frac{S}{D} &=&1-\frac{1-\varepsilon }{1+\delta }\tilde{=}\delta
+\varepsilon , \\
\frac{1}{2}\left( \frac{D}{S}-\frac{S}{D}\right)  &=&\frac{1}{2}\left( \frac{%
1+\delta }{1-\varepsilon }-\frac{1-\varepsilon }{1+\delta }\right) \tilde{=}%
\delta +\varepsilon 
\end{eqnarray*}%
So all three terms are equal up through $O\left( \delta ,\varepsilon \right)
.$ Thus, when one is not too far from equilibrium, these terms are
approximately equal and one can use any of them in the deterministic part of
the price equation.

Similarly, under these near equilibrium conditions, the terms $D/S,$ $S/D$
and $(D/S+S/D)/2$ are all equal through $O\left( 1\right) .$

$\left( ii\right) $ Next suppose that we are far from equilibrium, and note
that 
\[
\frac{D}{S}\tilde{=}\frac{D}{S}+\frac{S}{D}\ and\ \ \frac{D}{S}-1\tilde{=}%
\frac{D}{S}\tilde{=}\frac{D}{S}-\frac{S}{D}\ \ if\ \ D/S>>1.
\]%
Similarly, one has 
\[
\frac{S}{D}\tilde{=}\frac{D}{S}+\frac{S}{D}\ and\ \ 1-\frac{S}{D}\tilde{=}-%
\frac{S}{D}\tilde{=}\frac{D}{S}-\frac{S}{D}\ if\ \ S/D>>1.
\]

Applying these approximations to $\left( \ref{g1}\right) $ we see that for
market tops (when $D\geq S$) we can use the model%
\[
d\log P=\left( \frac{D}{S}-1\right) dt+\frac{D}{S}\sigma dW,
\]%
and analogously, for market bottoms, (when $S\geq D$) we use%
\[
d\log P=\left( 1-\frac{S}{D}\right) dt+\frac{S}{D}\sigma dW.
\]%
Note that for the coefficient of $\sigma dW$, we are essentially
approximating $G\left( x\right) :=x+1/x$ with $x$ when $x\geq 1$ and by $1/x$
when $x\leq 1$.  Near $x=1,$ of course, this introduces a factor of $2$ that
can be incorporated into $\sigma .$

\end{document}